\documentclass[11pt]{article}
\usepackage[utf8]{inputenc}

\usepackage{authblk} 

\usepackage{amsmath,amssymb,amsthm,mathtools,bm}
\usepackage{mathrsfs}
\usepackage{fullpage}
\usepackage{braket}
\usepackage{graphicx,epstopdf,float}
\usepackage{subcaption}
\usepackage[numbers,sort&compress]{natbib}
\usepackage[colorlinks=true,linkcolor=blue,citecolor=green,urlcolor=blue]{hyperref}
\usepackage[nameinlink,capitalise]{cleveref}
\providecommand{\doi}[1]{%
  \href{https://doi.org/#1}{\nolinkurl{#1}}%
}
\usepackage{appendix}

\newtheorem{theorem}{Theorem}[section]

\newtheorem{lemma}[theorem]{Lemma}
\newtheorem{assumption}{Assumption}
\newtheorem{remark}[theorem]{Remark}

\crefname{assumption}{Assumption}{assumptions}

\def\R{\mathbb{R}}
\def\N{\mathbb{N}}
\def\C{\mathbb{C}}
\def\H{\mathcal{H}}
\def\K{\mathcal{K}}
\def\V{\mathcal{V}}

\def\to{\rightarrow}
\def\To{\Longrightarrow}
\def\Ein{E_{\operatorname{in}}}
\def\Eout{E_{\operatorname{out}}}

\newcommand{\norm}[1]{\left\|#1\right\|}
\newcommand{\mat}[1]{\begin{pmatrix}#1 \\ \end{pmatrix}}

\title{\bfseries Sparse Carleman Embeddings for Nonlinear Discrete
Dynamical Systems}
\author[1]{Hsuan-Cheng Wu\thanks{wu.hsuancheng@psu.edu}}
\author[1]{Xiantao Li\thanks{Xiantao.Li@psu.edu}}

\affil[1]{Department of Mathematics, The Pennsylvania State University, University Park, Pennsylvania 16802, USA}

\begin{document}

\maketitle

\begin{abstract}
    Discrete nonlinear dynamical systems arise in population models, iterative optimization, and numerical simulations, where computing many iterations can be costly in high dimensions. We study Carleman embedding for forced quadratic discrete systems and its application to quantum state preparation. Under explicit stability and initial-state assumptions, we prove that the truncation error decreases geometrically with the embedding degree, uniformly over all iterations. A central obstacle in discrete time is that the lifted update can become dense even when the original coefficient maps are sparse. We overcome this obstacle by constructing an exact auxiliary linear-system representation that applies the coefficient maps one tensor factor at a time, with row and column sparsities bounded independently of the truncation degree. After rescaling, the associated history system has a condition number that grows linearly with the iteration count, uniformly in the truncation degree. These results yield a quantum algorithm for preparing an approximation to the normalized $n$th iterate with constant success probability. Under coherent input-access assumptions, and for fixed sparsity, stability parameters, and a known positive lower bound on the final-state norm, the oracle query complexity is linear in $n$ and polylogarithmic in the inverse accuracy. Numerical experiments in one and three dimensions illustrate the convergence of the truncated embedding.
\end{abstract}

\section{Introduction}

Discrete nonlinear dynamics are ubiquitous in scientific computing. They arise whenever a state is updated by a map
$\bm x_{n+1}=F(\bm x_n)$. Such recurrences model populations with nonoverlapping generations \cite{may1976simple} and time-stepping schemes for fluid
dynamics \cite{jennings2026end}. Discrete nonlinear dynamics also arise naturally in machine learning, including gradient-based optimization \cite{liu2024towards,lee2016gradient}, recurrent neural networks \cite{pascanu2013difficulty}, and the fixed-point formulations underlying deep equilibrium models \cite{bai2019deep}. The stability and convergence of these iterations play an important role in the design and analysis of learning algorithms \cite{pascanu2013difficulty,bai2021stabilizing}. These examples are inherently
specified by discrete updates: a discrete dynamical system need not be
a sampling or numerical discretization of any continuous-time evolution.
Even a low-degree map can exhibit behavior
that is difficult to infer from a single update, while a high-dimensional
map can be costly to iterate or probe over many steps. These observations
motivate algorithms that exploit the structure of $F$ to access a later
state or a selected observable. We focus on a stable quadratic regime in which the iterates converge to a fixed point.

A natural route toward quantum algorithms for nonlinear recurrences is to represent their updates as linear evolution in an enlarged space. For polynomial maps, Carleman embedding provides precisely this representation: introduce the monomials of the state, represented here by tensor powers $1,\bm x,\bm x^{\otimes2},\ldots$, and expand their updates under $\bm x_{n+1}=F(\bm x_n)$ to obtain an infinite linear recurrence \cite{berkolaiko1998analysis}. Truncating this hierarchy yields a finite linear approximation to the original nonlinear dynamics. This construction connects nonlinear iteration with quantum algorithms for linear algebra, provided that the approximation error and the cost of accessing the enlarged system can be controlled.

Quantum linear-system algorithms (QLSAs) make this connection computationally useful. Beginning with the algorithm of Harrow, Hassidim, and Lloyd \cite{harrow2009quantum}, these methods prepare quantum states proportional to solutions of linear systems, with subsequent developments improving the dependence on accuracy and conditioning \cite{childs2017quantum,costa2022optimal}. For a truncated Carleman recurrence, the lifted states at successive iterations can be coupled through a single history linear system, following the general strategy used in quantum algorithms for differential equations \cite{Ber14,berry2017quantum,childs2020quantum,krovi2023improved}. Solving this system prepares a state encoding the lifted history, from which a state proportional to the desired nonlinear iterate can be extracted, subject to the truncation error. Under suitable sparsity and coherent-access assumptions, the linear-system solver can have only polylogarithmic dependence on the enlarged dimension.

Realizing this potential requires more than a finite-dimensional approximation. The truncation error must remain controlled over the iteration horizon, the lifted operator must admit efficient quantum access, and the history system must have manageable conditioning. The cost of extracting the desired physical state from the lifted history must also be included. These requirements are coupled: increasing the truncation order enlarges the representation and may increase its implementation cost, even when it improves the approximation. The central question is therefore whether a convergent Carleman approximation can be combined with a quantum implementation whose total cost remains controlled. We address this question for a stable class of quadratic discrete dynamical systems.

Other quantum approaches to nonlinear problems include evolution of
multiple copies of a state through a mean-field approximation
\cite{lloyd2020nonlinear} and Koopman--von Neumann representations of
classical dynamics \cite{joseph2020koopman}. Homotopy perturbation has
also been used to construct quantum algorithms for dissipative nonlinear
ordinary differential equations \cite{xue2021homotopy} and for quadratic
nonlinear algebraic systems \cite{xue2022quadratic}. The latter solves
for a root of a nonlinear equation, whereas our task is to prepare a state
associated with the $n$th iterate of a nonlinear map. These methods
suggest different ways to expose linear structure, with distinct
requirements for approximation, state preparation, and output recovery.

This paper studies Carleman embedding for discrete dynamical systems under specific assumptions. The discrete setting creates the sparsity challenge: The update of each tensor power contains products of the coefficients of
$F$, including terms absent from a continuous-time Carleman generator. Moreover, sparse coefficients need not give a sparse truncated update: for $F(\bm x)=A\bm x$, its degree-$N$ block is
$A^{\otimes N}$, whose row sparsity can grow as $s_A^N$ if $A$ has row
sparsity $s_A$. A convergence bound for the truncation therefore does not
by itself yield an efficient sparse-access quantum implementation. This motivates a representation that applies the original coefficient maps one tensor factor at a time, without expanding their tensor products.

\subsection{Main Contributions}

We study the forced quadratic iteration
\[
    \bm x_{n+1}=\bm f+A\bm x_n+B\bm x_n\otimes \bm x_n.
\]
The Carleman lift collects the tensor powers of the state into
\[
    \bm X_n=(1,\bm x_n,\bm x_n^{\otimes2},\ldots)^T,
    \qquad \bm X_{n+1}=L\bm X_n.
\]
Writing $F_0=\bm f$, $F_1=A$, and $F_2=B$, the blocks of $L$ are
\[
    L_{j,k}
    =\sum_{\substack{\ell_1,\ldots,\ell_j\in\{0,1,2\}\\
                     \ell_1+\cdots+\ell_j=k}}
      F_{\ell_1}\otimes\cdots\otimes F_{\ell_j},
    \qquad j\ge1,\ k\ge0,
\]
with $L_{0,0}=1$ and $L_{0,k}=0$ for $k\ge1$.
Retaining degrees $0,\ldots,N$ defines
\[
    L_N=(L_{j,k})_{j,k=0}^{N},
    \qquad
    \bm y_{n+1}^{(N)}=L_N\bm y_n^{(N)},
    \qquad
    \bm y_0^{(N)}
    =(1,\bm x_0,\ldots,\bm x_0^{\otimes N})^\top.
\]
The degree-zero coordinate remains one, and the degree-one block
$\bm y_{n,1}^{(N)}$ approximates $\bm x_n$.
We establish the following results.
\begin{enumerate}
    \item \textbf{Uniform truncation error.} Under the stability and
    initial-state assumptions in \cref{ass: main}, the degree-one block
    $\bm y_{n,1}^{(N)}$ of the truncated lift satisfies
    \[
        \sup_{n\ge0}\norm{\bm x_n-\bm y_{n,1}^{(N)}}
        \le \rho R^N,\qquad 0\le R<1.
    \]
    For fixed parameters with $0<R<1$, a truncation degree
    $N=O(\log(1/\varepsilon))$ gives error at most $\varepsilon$
    independently of the iteration count.

    \item \textbf{Sparse realization of the lifted update.} We construct
    a nilpotent auxiliary operator $C$ and embeddings $\Ein,\Eout$ for
    which the truncated Carleman matrix has the exact representation
    \[
        L_N=\Ein^\dag D^{-1}\Eout,\qquad D=I-C^\dag.
    \]
    When $\bm f$ has at most $s_f$ nonzeros and the row and column sparsities
    of $A,B$ are bounded by $s_A,s_B$, each row and column of $D$ has at
    most $1+s_f+s_A+s_B$ nonzeros, independent of $N$. We also bound
    $\norm{D^{-1}}$ and its condition number.

    \item \textbf{History state QLSA.} After rescaling, the first $n$ truncated updates form
    a sparse linear system 
    \[ \widetilde{\mathcal A}_n
        =I_n\otimes\widetilde D-S_n\otimes P, 
    \]
    where 
    \[ S_n=\sum_{t=0}^{n-2}\ket{t+1}\bra{t}, \qquad P=\Eout\Ein^\dag.
    \]
    We prove $\kappa(\widetilde{\mathcal A}_n)=\Theta(n)$ with constants
    independent of $N$. Under coherent sparse-access and input-state
    assumptions, the padded history system yields an algorithm for
    preparing a state close to $\bm x_n/\norm{\bm x_n}$. Its cost also
    accounts for output extraction and a lower bound on $\norm{\bm x_n}$.
\end{enumerate}
Combining the uniform truncation bound with the sparse history formulation, we obtain a quantum algorithm for preparing the normalized $n$th iterate. Under coherent input-access assumptions and a known lower bound $0<\xi\le\|\bm x_n\|$, the algorithm achieves trace-distance error at most $\varepsilon$ with constant success probability, using 
\[
O\!\left(snK_r\Gamma\log\frac{\Gamma}{\varepsilon}\right),
\qquad
\Gamma=\frac{2rM_r}{\xi},
\]
coefficient-oracle and right-hand-side preparation queries. Here $s$ bounds the sparsity, while $K_r$ and $M_r$ depend on the stability parameters and rescaling radius. For fixed parameters and $\xi$, the query complexity is linear in the iteration count. Possible improvements of the sparsity dependence from $s$ to $\sqrt{s}$ are discussed in \cref{rem: sparsity_speedup}.

\subsection{Related Work}

Carleman representations of discrete recurrences predate quantum algorithms.
Berkolaiko et al.~\cite{berkolaiko1998analysis} studied polynomial and analytic
recursions, and Myszkowski~\cite{myszkowski2022recursions} developed constructions
for multivariate polynomial recurrences and systems with memory.
Pruekprasert et al.~\cite{pruekprasert2020moment,pruekprasert2024moment}
studied truncated Carleman propagation for discrete-time stochastic polynomial
systems and established approximation-error bounds. Deterministic polynomial
maps are included as a special case of that setting. These results provide
direct precedents for discrete lifting and truncation analysis.

Liu et al.~\cite{liu2024towards} developed Carleman-based quantum algorithms
for discrete gradient updates. Their supplementary analysis treats truncation
error, history linear systems, conditioning, sparsity, and corrections arising
from the discrete product rule. In particular, it includes an error bound
independent of the iteration count under a contraction assumption on the
chosen lifted update. Their main construction uses a small-learning-rate
approximation, while the supplement also derives the higher-order discrete
corrections. Our analysis instead treats the exact forced quadratic update
throughout and derives its guarantees directly from explicit bounds on the
original coefficient maps and initial state.

In fluid dynamics, Sanavio and Succi~\cite{sanavio2024lattice} applied Carleman lifting to the discrete collision and streaming updates of the lattice Boltzmann method, constructed quantum circuits, and numerically investigated the accuracy of low-order truncations. Sanavio et al.~\cite{sanavio2025carleman} subsequently developed matrix-access oracles and block encodings for the discrete Carleman update. They reported sparsity growing exponentially with truncation order and low success probabilities that hinder repeated time stepping. These works provide direct precedents for quantum implementation of discrete Carleman dynamics and motivate controlling the implementation cost as the truncation degree increases.

More recently, Jennings et al.\ developed a discrete Carleman formulation for lattice Boltzmann dynamics and examined time-stepping, conditioning, and data extraction through an end-to-end complexity analysis \cite{jennings2026end}. They also numerically investigated truncation convergence, but did not establish a rigorous convergence guarantee for that application. Our work considers forced quadratic maps under explicit stability assumptions and provides an exact sparse auxiliary realization of the truncated Carleman update, a truncation-error bound uniform in the iteration count, and a history-system analysis with sparsity independent of the tensor degree. The resulting quantum algorithm prepares an encoded solution state. Observable estimation and full classical readout incur additional costs.

\subsection{Outline}
The remainder of the paper is organized as follows. In \cref{sec: math-form}, we formulate the quadratic iteration and its
Carleman embedding, then establish a truncation-error bound that is uniform in the number of iterations. In \cref{sec:sparse-realization}, we construct an auxiliary sparse system that represents the truncated Carleman update exactly, without explicitly forming its potentially dense matrix. In \cref{sec:rescaled-history}, we develop the rescaled history-system formulation and analyze its conditioning, the recovery of the physical state, and the quantum algorithm's query complexity. Finally, \cref{sec: num} presents numerical experiments illustrating the convergence of the truncated embedding. All of the proofs are deferred to appendices.

\paragraph{Notation.}
Throughout, $\|\cdot\|$ denotes the Euclidean norm for vectors and the induced operator norm for matrices. $\|\cdot\|_1$ denotes the trace norm for matrices. For an invertible matrix $M$, we write
$\kappa(M)=\|M\|\|M^{-1}\|$ for its condition number.
The superscripts $T$ and $\dagger$ denote the transpose
and conjugate transpose, respectively. We use $\otimes$ for the standard Kronecker product and adopt
the convention $\bm x^{\otimes 0}=1$. The symbols $I_k$ and $I_{\mathcal H}$ denote the identity
on $\mathbb C^k$ and $\mathcal H$, respectively. $\N_0$ denotes the set of nonnegative integers. All vectors are in \textbf{bold} font.

\section{Mathematical Formulation} \label{sec: math-form}

\subsection{Problem Setup}
We consider discrete quadratic iteration
\begin{equation} \label{eq: main eq}
    \bm x_{n+1} = \bm f + A\bm x_n + B \bm x_n \otimes\bm x_n, \qquad \bm x_0 \in\mathbb{C}^d,
\end{equation}
for any $n\in\N_0$, where $\bm x_n\in \C^d$, $\bm f\in \C^d$, $A: \C^d\to \C^d$ and $B: \C^{d^2}\to\C^d$.

We note that the convergence argument and the method for building the quantum linear system can be extended to higher-degree polynomial discrete systems. We provide further details in \cref{remark: high_orders}. In the remainder of this paper, we focus on the quadratic case for simplicity.

\begin{assumption}\label{ass: main}
There are known bounds $\norm A\le \alpha<1$ and $\norm B\le\beta$,
and the known forcing norm is $c=\norm{\bm f}$, with $\beta>0$,
\begin{equation}\label{eq:roots}
 \Delta=(1-\alpha)^2-4\beta c>0,\qquad
 r_\pm=\frac{1-\alpha\pm\sqrt\Delta}{2\beta},\qquad
 u:=\norm{\bm x_0}<r_+.
\end{equation}
\end{assumption}

Carleman embedding replaces the nonlinear iteration by a linear evolution of all tensor powers of the state. Let $H=\mathbb C^d$ and adopt the convention $\bm x^{\otimes 0}=1$. Define
\begin{equation}
    F_0=\bm f,\qquad F_1=A,\qquad F_2=B,
\end{equation}
where $F_\ell:H^{\otimes\ell}\to H$. For $j\ge1$, set
\begin{equation}\label{eq:carleman-blocks}
    L_{j,k}
    :=
    \sum_{\substack{\ell_1,\ldots,\ell_j\in\{0,1,2\}\\
                    \ell_1+\cdots+\ell_j=k}}
    F_{\ell_1}\otimes\cdots\otimes F_{\ell_j},
    \qquad 0\le k\le 2j,
\end{equation}
and set $L_{j,k}=0$ for $k>2j$. Each block
$L_{j,k}:H^{\otimes k}\to H^{\otimes j}$ collects the terms of input degree $k$ in the expansion
\begin{equation}\label{eq:exact-tensor-update}
    \bm x_{n+1}^{\otimes j}
    =
    \left(\bm f+A\bm x_n+B\bm x_n^{\otimes2}\right)^{\otimes j}
    =
    \sum_{k=0}^{2j}L_{j,k}\bm x_n^{\otimes k}.
\end{equation}
In particular, $L_{1,0}=\bm f$, $L_{1,1}=A$, $L_{1,2}=B$, and
$L_{j,0}=\bm f^{\otimes j}$.

\paragraph{The Carleman embedding system.}
Carleman introduced a method for embedding nonlinear polynomial ODEs into an infinite-dimensional linear system by lifting the state to all monomials \cite{carleman1932application}. Here, we apply the same lifting principle to the discrete-time quadratic iteration, obtaining an infinite-dimensional linear recurrence for the lifted state.
\begin{equation}
    \bm X_n
    :=
    \begin{pmatrix}
        1\\
        \bm x_n\\
        \bm x_n^{\otimes2}\\
        \bm x_n^{\otimes3}\\
        \vdots
    \end{pmatrix}.
\end{equation}
Setting $L_{0,0}=1$ and $L_{0,k}=0$ for $k\ge1$, the exact nonlinear iteration becomes the linear recurrence
\begin{equation}\label{eq:infinite-carleman}
    \bm X_{n+1}=L\bm X_n,
    \qquad
    L= \mat{
        1 & 0 & 0 & 0 & 0 & \cdots\\
        \bm f & A & B & 0 & 0 & \cdots\\
        L_{2,0} & L_{2,1} & L_{2,2} & L_{2,3} & L_{2,4} & \cdots\\
        L_{3,0} & L_{3,1} & L_{3,2} & L_{3,3} & L_{3,4} & \cdots\\
        \vdots & \vdots & \vdots & \vdots & \vdots & \ddots}.
\end{equation}
This identity is interpreted blockwise: each block row contains only finitely many nonzero blocks, so no summability assumption on the full infinite vector is required.

For continuous-time quadratic ODE systems, the Carleman embedding yields a block-tridiagonal matrix \cite{liu2021efficient}. In contrast, the discrete-time Carleman matrix in \eqref{eq:infinite-carleman} generally lacks this structure.

\paragraph{Exact equations for the retained degrees.}

Fix a truncation degree $N\ge1$ and retain
\begin{equation}
    \bm X_n^{(N)}
    :=
    \begin{pmatrix}
        1\\
        \bm x_n\\
        \vdots\\
        \bm x_n^{\otimes N}
    \end{pmatrix},
    \qquad
    L_N:=(L_{j,k})_{0\le j,k\le N}.
\end{equation}
The matrix $L_N$ has dimension
$\sum_{j=0}^{N}d^j$. Splitting
\cref{eq:exact-tensor-update} into retained and omitted input degrees gives the exact identity
\begin{equation}\label{eq:finite-exact-carleman}
    \bm X_{n+1}^{(N)}
    =
    L_N\bm X_n^{(N)}+\bm\tau_n^{(N)},
\end{equation}
where
\begin{equation}\label{eq:carleman-remainder}
    \tau_{n,0}^{(N)}=0,
    \qquad
    \bm\tau_{n,j}^{(N)}
    :=
    \sum_{k=N+1}^{2j}L_{j,k}\bm x_n^{\otimes k},
    \qquad 1\le j\le N.
\end{equation}
An empty sum is zero. Thus $\bm\tau_{n,j}^{(N)}=0$ whenever $2j\le N$.

\cref{eq:finite-exact-carleman} is exact, but it is not a closed linear recurrence for the retained variables: the remainder contains tensor powers of degree greater than $N$.
\cref{eq:finite-exact-carleman} is useful for error analysis, which will be compared with the truncated Carleman system afterwards. 

\paragraph{The truncated Carleman system.}
Dropping the remainder in \cref{eq:finite-exact-carleman} defines the finite Carleman approximation
\begin{equation}\label{eq:finite-truncated-carleman}
        \bm y_{n+1}^{(N)}=L_N\bm y_n^{(N)},
        \qquad
        \bm y_0^{(N)}=\bm X_0^{(N)}.
\end{equation}
The degree-zero coordinate remains exactly one. Consequently, the positive-degree equations are affine:
\begin{equation}\label{eq:finite-carleman-positive}
    \bm y_{n+1,j}^{(N)}
    =
    \bm f^{\otimes j}
    +
    \sum_{k=1}^{\min(N,2j)}
    L_{j,k}\bm y_{n,k}^{(N)},
    \qquad 1\le j\le N.
\end{equation}
The approximation to the original state is the degree-one block
$\bm y_{n,1}^{(N)}$.

To distinguish the omitted terms from the accumulated approximation error, define
\begin{equation}
    \bm \eta_n
    :=
    \bm X_n^{(N)}-\bm y_n^{(N)}.
\end{equation}
Subtracting \eqref{eq:finite-truncated-carleman} from
\eqref{eq:finite-exact-carleman} yields
\begin{equation}\label{eq:carleman-error-recurrence}
    \bm \eta_{n+1}
    =
    L_N\bm \eta_n+\bm\tau_n^{(N)},
    \qquad
    \bm \eta_0=0.
\end{equation}
Since $\bm \eta_{n,0}=0$, its positive-degree blocks satisfy
\begin{equation}
    \bm \eta_{n+1,j}
    =
    \sum_{k=1}^{\min(N,2j)}
    L_{j,k}\bm \eta_{n,k}
    +
    \sum_{k=N+1}^{2j}
    L_{j,k}\bm x_n^{\otimes k}.
\end{equation}
The second sum introduces truncation error only when $2j>N$, while the first sum propagates existing errors between retained degrees. Therefore lower-degree errors, including the error in the physical block $j=1$, can become nonzero even when their equations have no direct omitted-degree term.

\subsection{Convergence of Carleman embedding}
We now state the uniform convergence of the Carleman embedding at finite truncation level-$N$.
\begin{theorem}[Uniform geometric convergence]\label{thm: uniform}
Under \cref{ass: main}, let $\rho=\max\{u,r_-\}$ and $R=\rho/r_+<1$. For every $N\in\N$, $n\ge0$, and
$1\le j\le N$,
\begin{equation}\label{eq:uniform}
 \norm{\bm x_n^{\otimes j}-\bm y^{(N)}_{n,j}}
 \le r_+^jR^{N+1}.
\end{equation}
In particular,
\begin{equation}\label{eq:firsterror}
 \sup_{n\ge0}\norm{\bm x_n-\bm y^{(N)}_{n,1}}\le\rho R^N.
\end{equation}
\end{theorem}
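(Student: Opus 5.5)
The plan is to compare everything with the scalar quadratic majorant
\[
g(z)=c+\alpha z+\beta z^2,
\]
whose fixed points are exactly $r_-<r_+$ (distinct because $\Delta>0$). Write $g(z)^j=\sum_{k=0}^{2j}\ell_{j,k}z^k$ with nonnegative coefficients
\[
\ell_{j,k}=\sum_{\ell_1+\cdots+\ell_j=k}\prod_i \gamma_{\ell_i},\qquad \gamma_0=c,\ \gamma_1=\alpha,\ \gamma_2=\beta .
\]
The proof has three steps: bound the Carleman blocks, bound the physical orbit, and close an induction on the error. Note first that $R<1$ is immediate, since $u<r_+$ and $r_-<r_+$.

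\emph{Step 1 (block bounds).} The operator norm is multiplicative on Kronecker products, $\norm{F_{\ell_1}\otimes\cdots\otimes F_{\ell_j}}=\prod_i\norm{F_{\ell_i}}$. The triangle inequality applied to \eqref{eq:carleman-blocks} then gives $\norm{L_{j,k}}\le \ell_{j,k}$ for all $j\ge 1$ and $k\ge 0$.

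\emph{Step 2 (orbit bound).} I claim $\norm{\bm x_n}\le\rho$ for all $n$. Taking norms in \eqref{eq: main eq} gives $\norm{\bm x_{n+1}}\le g(\norm{\bm x_n})$, and $g$ is increasing on $[0,\infty)$. Moreover $g(z)\le z$ on $[r_-,r_+]$, and $g(z)\le g(r_-)=r_-$ on $[0,r_-]$. Hence $g$ maps $[0,\rho]$ into itself whenever $r_-\le\rho<r_+$. Induction from $\norm{\bm x_0}=u\le\rho$ gives the claim, and therefore $\norm{\bm x_n^{\otimes k}}\le\rho^k$.

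\emph{Step 3 (error induction).} Set $e_{n,j}=\norm{\bm\eta_{n,j}}$ and $b_j=r_+^jR^{N+1}$. I will show by induction on $n$ that $e_{n,j}\le b_j$ for all $1\le j\le N$; the base case holds because $\bm\eta_0=0$. Apply Steps 1 and 2 to the block form of \eqref{eq:carleman-error-recurrence}:
\[
e_{n+1,j}\le\sum_{k=1}^{\min(N,2j)}\ell_{j,k}\,r_+^kR^{N+1}+\sum_{k=N+1}^{2j}\ell_{j,k}\,\rho^k .
\]
For $k\ge N+1$ we have $\rho^k=r_+^kR^k\le r_+^kR^{N+1}$ because $R<1$. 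Both sums are then dominated by $R^{N+1}\sum_{k=0}^{2j}\ell_{j,k}r_+^k$. This equals $R^{N+1}g(r_+)^j=R^{N+1}r_+^j$, since $r_+$ is a fixed point of $g$. This closes the induction and proves \eqref{eq:uniform}. Taking $j=1$ gives $r_+R^{N+1}=\rho R^N$, which is \eqref{eq:firsterror}.

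The key point is choosing the inductive bound $b_j$ proportional to $r_+^j$, so that the fixed-point identity $g(r_+)=r_+$ absorbs both the propagation of earlier errors and the new truncation terms. Adding the nonnegative $k=0$ term $\ell_{j,0}r_+^0R^{N+1}$ only loosens the estimate, so no sign issue arises there. The remaining step needing care is the monotonicity argument in Step 2, especially the case $u<r_-$, where the orbit bound comes from $g(z)\le r_-$ on $[0,r_-]$ rather than from $g(z)\le z$.
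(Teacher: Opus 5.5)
Your proof is correct and follows essentially the same route as the paper's: the scalar majorant $c+\alpha s+\beta s^2$ bounding $\norm{L_{j,k}}$ by the coefficients of its $j$th power, invariance of the ball of radius $\rho$, and an induction on $n$ with the bound $r_+^jR^{N+1}$ closed by the fixed-point identity at $r_+$. The differences are cosmetic. You handle the cases $2j\le N$ and $2j>N$ in one estimate, and you make explicit the step $\rho^k\le r_+^kR^{N+1}$ for $k\ge N+1$, which the paper leaves implicit.
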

We defer the proof to \cref{pf: thm-uniform}.

\begin{remark} \label{remark: high_orders}
    The formal procedure extends to polynomial maps
    $\bm x_{n+1}=\sum_{\ell=0}^{p}F_\ell\bm x_n^{\otimes\ell}$
    of arbitrary finite degree. Define the scalar majorant
    $h(s)=\sum_{\ell=0}^{p}a_\ell s^\ell$, where
    $a_\ell\ge\norm{F_\ell}$. If there exist $0<\rho<r$ such that $\norm{\bm x_0}\le\rho$, $h(\rho)\le\rho$, and $h(r)\le r$, then the same coefficient-majorization argument yields
    \begin{equation} \label{eq: general_case_bound}
        \sup_{n\ge0}\norm{\bm x_n-\bm y_{n,1}^{(N)}} \le \rho\left(\frac{\rho}{r}\right)^N,
    \end{equation}
    where $\bm y^{(N)}$ is the degree-$N$ Carleman truncation initialized by the tensor powers of $\bm x_0$. Thus, for fixed $\rho$ and $r$, a truncation degree $N=O(\log(1/\varepsilon))$ suffices for error at most $\varepsilon$, uniformly in the number of iterations. When $p=2$, taking $r=r_+$ and $R=\rho/r_+$ recovers \cref{eq:firsterror}.
\end{remark}

From now on, we use $\bm y$ to denote $\bm y^{(N)}$ for simplicity.

\section{Sparse Realization of the Truncated Carleman Embedding} \label{sec:sparse-realization}

Recall the quadratic iteration \cref{eq: main eq},
\begin{equation}
    \bm x_{n+1}=F(\bm x_n)=\bm f+A\bm x_n+B\bm x_n^{\otimes 2},
    \qquad \bm x_n\in \H:=\mathbb C^d.
\end{equation}
We use Euclidean vector norms and their induced operator norms throughout.
Assume that $A$ and $B$ have at most $s_A$ and $s_B$ nonzero entries in every row and column, respectively, and that $\bm f$ has at most $s_f$ nonzero entries.

The sparsity of these coefficients does not imply sparsity of the discrete Carleman matrix. Even if $\bm f=B=0$, the degree-$N$ block is $A^{\otimes N}$, whose row sparsity can be $s_A^N$. Thus explicitly expanding the discrete Carleman matrix can destroy sparsity. We avoid this expansion by introducing auxiliary variables that process one original tensor factor at a time.

\subsection{A Scalar Example at $N=2$}
Before we show that how to handle the sparsity in general, we first show that how to overcome it by a simple example with truncation level $N=2$. Consider a scalar example,
\begin{equation} \label{eq: example}
    x_{n+1}=F(x_n)= f+ax_n+bx_n^2, \quad f,a,b\in\R.
\end{equation}
Expanding $F(x)^2$, one has
\begin{equation}
    F(x)^2 = f^2+ 2afx+ (a^2+2bf)x^2 + 2abx^3 + b^2x^4.
\end{equation}
The Carleman matrix at truncation level $N=2$ is
\begin{equation}
    L_2 = \mat{1 & 0 & 0 \\ f & a & b \\ f^2 & 2af & a^2+2bf}.
\end{equation}
To expand the power of $F(x)$ one factor at a time,  introduce the auxiliary expressions $x^pF(x)^\ell$, where $p$ records the accumulated input degree and $\ell$ records the number of factors remaining to be expanded,
\[ x^pF(x)^\ell = fx^p F(x)^{\ell-1} + ax^{p+1}F(x)^{\ell-1}+bx^{p+2} F(x)^{\ell-1}. \]
We then define 
\begin{equation}
    Q = \mat{f & 0 & 0 \\ a & f & 0\\ b & a & f}, \quad C= \mat{0 & Q & 0 \\ 0 & 0 & Q \\ 0 & 0 & 0}, \quad D = I-C^T = \mat{I_3 & 0 & 0 \\ -Q^T & I_3 & 0 \\ 0 & -Q^T & I_3 }.
\end{equation}
This simple example demonstrates that the sparsity is not guaranteed after Carleman embedding. We see that $Q$ and $L_2$ are dense operators. 

In order to avoid dense operators, we  decompose the iteration of the Carleman system into several substeps and each step only contains sparse operators. We introduce sparse rectangular matrices $\Ein$ and $\Eout$ as 
\begin{equation}
    \Ein = I_3\otimes \bm e_0,\quad \Eout = \bm e_0\otimes I_3, \quad \bm e_0 = \mat{1 & 0 & 0}^T.
\end{equation}
For any fixed vector $\bm y\in\R^{3}$, we set the auxiliary vector 
\begin{equation}
    \bm w = \mat{\bm w^{(0)} \\ \bm w^{(1)} \\ \bm w^{(2)}},\quad \bm w^{(\ell)} = \mat{w_{0,\ell}\\ w_{1,\ell}\\ w_{2,\ell}}.
\end{equation}
to solve $D\bm w = \Eout \bm y$. Then $\bm w^{(0)} = \bm y$, $\bm w^{(1)} = Q^T \bm y$, $\bm w^{(2)} =Q^T \bm w^{(1)}. $ Therefore,
\begin{equation}
    \Ein^\dag \bm w = L_2\bm y.
\end{equation}
We conclude that 
\begin{equation}
    L_2 = \Ein^\dag (I-C^\dag)^{-1} \Eout.
\end{equation}

\subsection{General Construction and Recovery Identity}
We now show the same construction for the general vector iteration and prove the construction is equivalent to the original Carleman setup.

Recall that the Carleman blocks are defined in \cref{eq:carleman-blocks} with $L_{0,0}=1$ and $L_{0,k}=0$ for $k>0$. The truncated matrix $L_N$ acts on the enlarged Hilbert space 
\[ \mathcal{V}_N \coloneqq \bigoplus_{j=0}^N \H^{\otimes j}. \]
The dimension of $\mathcal{V}_N$ is $m = \sum_{j=0}^N d^j$. For each pair $(p,\ell)\in \{0,\cdots, N\}^2$, we introduce an auxiliary space $\H^{\otimes(p+\ell)}$, and we group the spaces as
\[ \mathcal{K}_\ell \coloneqq \bigoplus_{p=0}^N \H^{\otimes(p+\ell)}, \quad \mathcal{K}\coloneqq \bigoplus_{\ell=0}^N \mathcal{K}_\ell.\]
Then $\dim(\K_\ell) = d^\ell m$ and $\dim(\K)= m^2.$
We order the coordinates first by $\ell$ and then by $p$:
\[
\bm w\coloneqq\mat{\bm w^{(0)} \\ \vdots \\ \bm w^{(N)}},
\quad
\bm w^{(\ell)}=\mat{ \bm w_{0,\ell} \\ \vdots \\ \bm w_{N,\ell}},
\quad
\bm w_{p,\ell}\in\H^{\otimes(p+\ell)}.
\]
Here $p$ records the input degree already produced, and $\ell$
records the number of factors still to be expanded.
For $1\le\ell\le N$, define
$Q_\ell:\K_\ell\to\K_{\ell-1}$ through its blocks:
\begin{equation}\label{eq:sparse-Q}
(Q_\ell)_{p',p}
=
\begin{cases}
I_{\H}^{\otimes p}\otimes F_i^\dag
\otimes I_{\H}^{\otimes(\ell-1)},
& p'=p+i,\quad i\in\{0,1,2\},\\
0,&\text{otherwise},
\end{cases}
\qquad 0\le p,p'\le N.
\end{equation}
Thus branches with $p+i>N$ are omitted.
Each nonzero block maps
\[
\H^{\otimes(p+\ell)}
\longrightarrow
\H^{\otimes(p+i+\ell-1)},
\]
and
\[
Q_\ell\in
\mathbb C^{(d^{\ell-1}m)\times(d^\ell m)}.
\]
Under the natural tensor identifications, we have
\[
Q_\ell=Q_1\otimes I_{\H}^{\otimes(\ell-1)}.
\]

Define $C:\K\to\K$ and $D:\K\to\K$ by
\begin{equation}\label{eq:sparse-C}
C=
\mat{
0 & Q_1 & 0 & \cdots & 0\\
0 & 0 & Q_2 & \cdots &0\\
\vdots & \vdots & \ddots & \ddots & \vdots\\
0 & \cdots & 0 & 0 & Q_N\\
0 & \cdots & 0 & 0 & 0},
\qquad
D=I_{\K}-C^\dag.
\end{equation}
Notice that $C$ is $(N+1)$-nilpotent. 
Hence
\[
C^{N+1}=0,
\qquad
D^{-1}=\sum_{s=0}^N(C^\dag)^s.
\]
In particular, $D$ is invertible without any norm condition on $C$.

Finally, we introduce the embeddings $E_{\mathrm{out}},E_{\mathrm{in}}:\V_N\to\K$
\begin{equation}\label{eq:sparse-embeddings}
E_{\mathrm{out}}
=
\begin{pmatrix}
I_m\\
0_{(m^2-m)\times m}
\end{pmatrix},
\qquad
E_{\mathrm{in}}
=
\bigoplus_{\ell=0}^N
\begin{pmatrix}
I_{d^\ell}\\
0_{(m-1)d^\ell\times d^\ell}
\end{pmatrix}.
\end{equation}
Both matrices have dimensions $m^2\times m$.
The map $E_{\mathrm{out}}$ inserts $\bm y$ into layer zero,
while $E_{\mathrm{in}}$ inserts $\bm y_\ell$ into the $p=0$
block of layer $\ell$. Consequently,
\[
E_{\mathrm{in}}^\dag\bm w=(\bm w_{0,0},\ldots,\bm w_{0,N})^T.
\]

\begin{lemma} \label{lem: recovery_Carleman}
The truncated Carleman matrix $L_N$ can be written as
    \begin{equation}
        L_N = \Ein^\dag (I-C^\dag)^{-1} \Eout.
    \end{equation}
\end{lemma}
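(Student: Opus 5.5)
The plan is to compute $D^{-1}\Eout\bm y$ explicitly by forward substitution, for an arbitrary $\bm y=(\bm y_0,\ldots,\bm y_N)\in\V_N$ with $\bm y_k\in\H^{\otimes k}$. Then read off the $p=0$ blocks with $\Ein^\dag$ and compare with the rows of $L_N$. This is the general-$N$ version of the scalar computation for $N=2$ above.

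\textbf{Step 1: reduce to a recurrence.} Since $C$ is block upper triangular with blocks $Q_\ell$ on the superdiagonal, $C^\dag$ is block lower triangular with blocks $Q_\ell^\dag:\K_{\ell-1}\to\K_\ell$ on the subdiagonal. So $D=I-C^\dag$ is unit block lower triangular, and $D\bm w=\Eout\bm y$ is equivalent to
\[
\bm w^{(0)}=\bm y,\qquad \bm w^{(\ell)}=Q_\ell^\dag\bm w^{(\ell-1)},\quad 1\le\ell\le N.
\]
The equation $\bm w^{(0)}=\bm y$ holds because $\Eout$ places $\bm y$ in layer zero, so that $\bm w_{p,0}=\bm y_p$. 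Taking adjoints of the blocks in \eqref{eq:sparse-Q} gives the blockwise recurrence
\[
\bm w_{p,\ell}=\sum_{\substack{i\in\{0,1,2\}\\ p+i\le N}}\bigl(I_\H^{\otimes p}\otimes F_i\otimes I_\H^{\otimes(\ell-1)}\bigr)\bm w_{p+i,\ell-1}.
\]
Here each map sends $\H^{\otimes(p+i+\ell-1)}$ to $\H^{\otimes(p+\ell)}$.

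\textbf{Step 2: closed form by induction on $\ell$.} I will prove that for all $0\le p,\ell\le N$,
\[
\bm w_{p,\ell}=\sum_{\substack{i_1,\ldots,i_\ell\in\{0,1,2\}\\ p+i_1+\cdots+i_\ell\le N}}\bigl(I_\H^{\otimes p}\otimes F_{i_1}\otimes\cdots\otimes F_{i_\ell}\bigr)\bm y_{p+i_1+\cdots+i_\ell}.
\]
The base case $\ell=0$ is $\bm w_{p,0}=\bm y_p$. For the inductive step, substitute the formula for $\bm w_{p+i,\ell-1}$ into the recurrence and use the mixed-product rule:
\[
(I^{\otimes p}\otimes F_i\otimes I^{\otimes(\ell-1)})(I^{\otimes(p+i)}\otimes F_{j_1}\otimes\cdots\otimes F_{j_{\ell-1}})=I^{\otimes p}\otimes F_i\otimes F_{j_1}\otimes\cdots\otimes F_{j_{\ell-1}}.
\]
This uses $F_i:\H^{\otimes i}\to\H$, so that $I^{\otimes p}\otimes F_i$ absorbs exactly the first $p+i$ identity factors. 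The two truncation constraints $p+i\le N$ and $p+i+\sum j\le N$ collapse to the single constraint $p+i+\sum j\le N$, because all indices are nonnegative. Relabeling $(i,j_1,\ldots,j_{\ell-1})$ as $(i_1,\ldots,i_\ell)$ gives the claim.

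\textbf{Step 3: conclusion.} Setting $p=0$ and grouping terms by $k=i_1+\cdots+i_\ell$ gives, for $\ell\ge1$,
\[
\bm w_{0,\ell}=\sum_{k=0}^{N}L_{\ell,k}\bm y_k=(L_N\bm y)_\ell,
\]
by \eqref{eq:carleman-blocks}. The terms with $k>2\ell$ are empty, consistent with $L_{\ell,k}=0$ there. For $\ell=0$ we get $\bm w_{0,0}=\bm y_0=(L_N\bm y)_0$, since $L_{0,0}=1$ and $L_{0,k}=0$ for $k\ge1$. Because $\Ein^\dag\bm w=(\bm w_{0,0},\ldots,\bm w_{0,N})$, this gives $\Ein^\dag D^{-1}\Eout\bm y=L_N\bm y$ for all $\bm y$, which proves the lemma.

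\textbf{Main obstacle.} The only delicate point is the bookkeeping in Step 2. One must check that the placement of $F_i$ after the $p$ already-produced factors and before the $\ell-1$ already-expanded output factors gives the ordered product $F_{i_1}\otimes\cdots\otimes F_{i_\ell}$, and not a permuted version. One must also check that the tensor spaces match, in the sense that $\H^{\otimes(p+i+\ell-1)}$ is identified consistently with $\H^{\otimes(p+i)}\otimes\H^{\otimes(\ell-1)}$. I will make these identifications explicit before running the induction.
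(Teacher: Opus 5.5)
Your proposal is correct and follows the paper's proof essentially step for step. Both arguments use forward substitution to obtain $\bm w^{(0)}=\bm y$ and $\bm w^{(\ell)}=Q_\ell^\dag\bm w^{(\ell-1)}$, then prove the closed form for $\bm w_{p,\ell}$ by induction on $\ell$ (simultaneously in $p$) via the mixed-product rule and the redundancy of the cutoff $p+i\le N$, and finally set $p=0$ and group terms by $k$ to recover $L_N\bm y$.
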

We provide the proof in \cref{pf: lem-recovery-Carleman}.

\begin{lemma}
\label{lem:sparsity-D}
Suppose every row and column of $A$ and $B$ has at most
$s_A$ and $s_B$ nonzero entries, respectively, and $\bm f$ has at most $s_f$ nonzero entries.
Then every row and column of $D=I_{\K}-C^\dag$ has at most
\begin{equation}\label{eq:D-sparsity}
    s_D\coloneqq 1+s_f+s_A+s_B
\end{equation}
nonzero entries, independently of the truncation degree $N$.
\end{lemma}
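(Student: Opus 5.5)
The count rests on one structural fact: $D=I_{\K}-C^\dag$ is the identity plus a single strictly block-lower-triangular layer. Since $C$ has nonzero blocks only in positions $(\ell-1,\ell)$, the matrix $C^\dag$ has nonzero blocks only in positions $(\ell,\ell-1)$, namely $Q_\ell^\dag:\K_{\ell-1}\to\K_\ell$. Hence every row of $D$ in layer $\ell\ge1$ meets only two block columns: the diagonal (identity, one entry) and layer $\ell-1$ through $-Q_\ell^\dag$. For layer $\ell=0$ the row meets only the identity. Dually, every column of $D$ in layer $\ell$ meets the identity and, for $\ell\le N-1$, the block $-Q_{\ell+1}^\dag$ in layer $\ell+1$. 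So it suffices to show that each row and each column of every $Q_\ell$ has at most $s_f+s_A+s_B$ nonzero entries. The diagonal of $D$ is unaffected by $C^\dag$, since the off-diagonal blocks never overlap the identity, and this gives the bound $1+s_f+s_A+s_B$.

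Next I would count nonzeros of $Q_\ell$ from \eqref{eq:sparse-Q}. Using $Q_\ell=Q_1\otimes I_{\H}^{\otimes(\ell-1)}$, tensoring with an identity preserves the per-row and per-column nonzero counts, so it is enough to treat $Q_1$. Its block $(p',p)$ is nonzero only when $p'=p+i$ with $i\in\{0,1,2\}$, and then it equals $I_{\H}^{\otimes p}\otimes F_i^\dag$.

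\emph{Fix a column of $Q_1$.} It lies in some block column $p$, and so meets at most three nonzero blocks, $i=0,1,2$ (fewer if $p+i>N$). Within the block for $i$, a column of $I^{\otimes p}\otimes F_i^\dag$ has as many nonzeros as a column of $F_i^\dag$, which equals the row sparsity of $F_i$. This count is $s_A$ for $A$ and $s_B$ for $B$. For $F_0=\bm f$, viewed as a $d\times1$ matrix, $F_0^\dag$ is a $1\times d$ row, so each column has at most one nonzero. Summing gives at most $1+s_A+s_B\le s_f+s_A+s_B$ whenever $\bm f\neq0$; if $\bm f=0$ the $i=0$ block vanishes and the bound still holds.

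\emph{Fix a row of $Q_1$.} It lies in block row $p'$ and meets block columns $p=p'-i$ for $i=0,1,2$ with $p\ge0$. A row of $I^{\otimes p}\otimes F_i^\dag$ has as many nonzeros as a row of $F_i^\dag$, which is the column sparsity of $F_i$. This gives $s_A$ and $s_B$ for $i=1,2$. For $i=0$, the single row of $F_0^\dag=\bm f^\dag$ has at most $s_f$ nonzeros. The total is at most $s_f+s_A+s_B$.

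The main obstacle is bookkeeping rather than substance. One has to check that distinct $i$ really land in distinct block columns or rows, so that the counts add and do not overlap in a way that could be miscounted. One also has to confirm that the Kronecker identifications in $\mathcal K$ are consistent with treating each block as $I\otimes F_i^\dag$ acting on the appropriate factor. None of these counts depends on $N$, which proves the claim.
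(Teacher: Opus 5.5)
Your proposal is correct and follows essentially the same argument as the paper. In both, each nonzero block of $C^\dag$ is an identity-tensored copy of some $F_i$, each block row and block column contains at most one block of each type $i\in\{0,1,2\}$, tensoring with identities preserves per-row and per-column nonzero counts, and the identity adds one entry because $C^\dag$ shifts the layer index. Your separate count for the $i=0$ block (at most one nonzero per column, at most $s_f$ per row) slightly sharpens the paper's uniform $s_f$ bound. The reduction from $Q_\ell$ to $Q_1$ is harmless but unnecessary, since each block $I_{\H}^{\otimes p}\otimes F_i^\dag\otimes I_{\H}^{\otimes(\ell-1)}$ can be counted directly.
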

We provide the proof in \cref{pf: lem-sparsity-D}.

\section{Quantum Algorithms and Complexity}
\label{sec:rescaled-history}

We use the sparse representation to prepare a quantum state proportional
to $\bm x_n$ at a prescribed iteration $n\ge1$.
The construction has three steps: rescale the coefficients to control
conditioning, solve a padded history system, and select its physical block.

\subsection{Rescaling and uniform operator bounds}

Under \cref{ass: main}, choose a fixed radius $r$ such that
\begin{equation}\label{eq:hs-parameters}
    \rho<r<r_+,\qquad
    a=\frac{\rho}{r}<1,\qquad
    q=\frac cr+\alpha+\beta r<1.
\end{equation}
Since $r_-\le\rho<r_+$, such an $r$ exists and satisfies
\[
    q-1=\frac{h(r)-r}{r}
    =\frac{\beta}{r}(r-r_-)(r-r_+)<0.
\]
With $\bm z_t=\bm x_t/r$, the original iteration becomes
\begin{equation}\label{eq:hs-scaled-map}
    \bm z_{t+1}
    =\widetilde F_0+\widetilde F_1\bm z_t
      +\widetilde F_2\bm z_t^{\otimes2},
    \qquad
    \widetilde F_0=\bm f/r,\quad
    \widetilde F_1=A,\quad
    \widetilde F_2=rB.
\end{equation}
Let $\widetilde L_N,\widetilde C$, and
$\widetilde D=I_{\K}-\widetilde C^\dag$ be the matrices obtained by applying
the construction of Section~\ref{sec:sparse-realization} to these coefficients.
The embeddings $\Ein,\Eout$ and the operator $P=\Eout\Ein^\dag$ are unchanged.
In particular,
\begin{equation}\label{eq:hs-recovery}
    \widetilde L_N=\Ein^\dag\widetilde D^{-1}\Eout,
    \qquad
    \widetilde{\bm y}_{t+1}=\widetilde L_N\widetilde{\bm y}_t,
    \qquad
    \widetilde{\bm y}_{0,j}=(\bm x_0/r)^{\otimes j}.
\end{equation}
The rescaled parameters are
\[
    (\widetilde\alpha,\widetilde\beta,\widetilde c,\widetilde u)
    =(\alpha,r\beta,c/r,u/r).
\]
Direct substitution gives
\begin{align}
    \widetilde\Delta
    &=(1-\alpha)^2-4(r\beta)(c/r)=\Delta,\notag\\
    \widetilde r_\pm
    &=\frac{1-\alpha\pm\sqrt\Delta}{2r\beta}=\frac{r_\pm}{r},
    \qquad
    \widetilde\rho=\max\{u/r,r_-/r\}=\frac\rho r,
    \qquad \widetilde R=\frac{\widetilde\rho}{\widetilde r_+}=R.
    \label{eq:hs-assumption-invariance}
\end{align}
Together with $\widetilde u=u/r<r_+/r=\widetilde r_+$, these identities
verify \cref{ass: main} for the rescaled iteration.
Moreover, $\widetilde\rho=a<1$ and the sum of the coefficient bounds is $q<1$.

To relate the truncated trajectories, note that
$\widetilde F_i=r^{i-1}F_i$ for $i=0,1,2$.
Every tensor product contributing to $L_{j,k}$ has $j$ factors whose
input degrees sum to $k$, so its scale factor is $r^{k-j}$.
Consequently, with $V_r=\bigoplus_{j=0}^N r^{-j}I_{d^j}$,
\[
    \widetilde L_N=V_rL_NV_r^{-1},\qquad
    \widetilde{\bm y}_t
    =(V_rL_NV_r^{-1})^tV_r\bm y_0=V_r\bm y_t.
\]
In particular,
\begin{equation}\label{eq:hs-physical-recovery}
    r\widetilde{\bm y}_{t,1}=\bm y_{t,1},\qquad
    \frac{\widetilde{\bm y}_{t,1}}
         {\norm{\widetilde{\bm y}_{t,1}}}
    =\frac{\bm y_{t,1}}{\norm{\bm y_{t,1}}}
\end{equation}
whenever the physical block is nonzero. Thus the classical approximation
is recovered by multiplying by $r$, while its normalized quantum state
needs no inverse rescaling. The circuit accesses $\bm f/r,A,rB$ directly.

For later use, define
\begin{equation}\label{eq:hs-constants}
    \delta=\frac1{1-q},\qquad
    g=\frac1{\sqrt{1-q^2}},\qquad
    K_r=\delta g^2,\qquad
    M_r=\frac{g}{\sqrt{1-a^2}}.
\end{equation}
These quantities depend on the problem parameters and on $r$, but not on
$n$, $N$, or $d$ when the supplied norm bounds are held fixed.

\begin{lemma}
\label{lem:hs-uniform}
For every $N\ge1$, set $J=\widetilde D^{-1}\Eout$. Then
\begin{equation}\label{eq:hs-D-bounds}
    \norm{\widetilde C}\le q, \quad
    \norm{\widetilde D}\le1+q, \quad
    \norm{\widetilde D^{-1}}\le\delta, \quad
    \kappa(\widetilde D)\le(1+q)\delta, \quad
    \norm{J}\le g.
\end{equation}
Along the truncated trajectory in \eqref{eq:hs-recovery},
\begin{equation}\label{eq:hs-trajectory-bounds}
    \norm{\widetilde{\bm y}_{t,j}}\le a^j,
    \qquad
    \norm{\widetilde{\bm y}_t}\le\frac1{\sqrt{1-a^2}},
    \qquad
    \norm{J\widetilde{\bm y}_t}\le M_r
\end{equation}
for all $t\ge0$ and $0\le j\le N$.
Moreover, the truncated update is uniformly power bounded:
\begin{equation} \label{eq: L_N_power_bound}
    \norm{\widetilde L_N^k}\le g
    \qquad\forall k\in\N_0.
\end{equation}
\end{lemma}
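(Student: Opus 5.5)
The plan is to get every bound in \cref{lem:hs-uniform} from two facts: the layered structure of $\widetilde C$, and domination of the Carleman blocks by the scalar majorant $\widetilde h(s)=c/r+\alpha s+r\beta s^2$. Note $\widetilde h(1)=q$, and $\widetilde h(a)=h(\rho)/r\le\rho/r=a$, since $r_-\le\rho<r_+$ implies $h(\rho)\le\rho$.

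\emph{Operator bounds.} Since $\widetilde C$ is block superdiagonal with blocks $\widetilde Q_\ell$ acting between orthogonal layers, we have $\norm{\widetilde C}=\max_\ell\norm{\widetilde Q_\ell}$. Moreover $\widetilde Q_\ell=\widetilde Q_1\otimes I^{\otimes(\ell-1)}$, so it suffices to bound $\widetilde Q_1$. Split $\widetilde Q_1=\sum_{i=0}^2 Q^{[i]}$, where $Q^{[i]}$ keeps only the blocks $(p+i,p)$. The blocks of $Q^{[i]}$ map the mutually orthogonal spaces $\H^{\otimes(p+1)}$ into the mutually orthogonal spaces $\H^{\otimes(p+i)}$, and each equals $I^{\otimes p}\otimes\widetilde F_i^\dag$. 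Hence $\norm{Q^{[i]}}\le\norm{\widetilde F_i}$, and the triangle inequality gives $\norm{\widetilde C}\le c/r+\alpha+r\beta=q$. Then $\norm{\widetilde D}\le1+q$. The Neumann series $\widetilde D^{-1}=\sum_{s=0}^N(\widetilde C^\dag)^s$ gives $\norm{\widetilde D^{-1}}\le\sum_s q^s\le\delta$, and the condition-number bound follows. For $J$, set $\bm w=J\bm v$ and solve $\widetilde D\bm w=\Eout\bm v$ layer by layer: $\bm w^{(0)}=\bm v$ and $\bm w^{(\ell)}=\widetilde Q_\ell^\dag\bm w^{(\ell-1)}$. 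Thus $\norm{\bm w^{(\ell)}}\le q^\ell\norm{\bm v}$. Because the layers are orthogonal, $\norm{\bm w}^2\le\sum_\ell q^{2\ell}\norm{\bm v}^2\le g^2\norm{\bm v}^2$.

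\emph{Trajectory bounds.} Expanding $\widetilde L_{j,k}$ as in \eqref{eq:carleman-blocks} gives $\norm{\widetilde L_{j,k}}\le[s^k]\widetilde h(s)^j$, the coefficient of $s^k$ in $\widetilde h(s)^j$. I will prove $\norm{\widetilde{\bm y}_{t,j}}\le a^j$ by induction on $t$. The base case is $(u/r)^j\le a^j$. For the step,
\[
\norm{\widetilde{\bm y}_{t+1,j}}\le\sum_{k\le N}[s^k]\widetilde h^j\,a^k\le\widetilde h(a)^j\le a^j,
\]
where dropping the terms with $k>N$ only decreases the sum because all coefficients are nonnegative. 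The $j=0$ case is trivial. Summing the squares gives $\norm{\widetilde{\bm y}_t}^2\le\sum_j a^{2j}\le(1-a^2)^{-1}$, and then $\norm{J\widetilde{\bm y}_t}\le gM_r/g=M_r$.

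\emph{Power bound.} This is the step I expect to be the main obstacle. The direct factorization $\widetilde L_N=\Ein^\dag J$ only gives $\norm{\widetilde L_N^k}\le g^k$. I will instead run the same majorant induction for an arbitrary input $\bm v$. The claim is that $\norm{(\widetilde L_N^k\bm v)_j}\le q^j\norm{\bm v}$ for all $k\ge1$ and all $j$.
\begin{itemize}
\item For $k=1$: every block satisfies $\norm{\bm v_k}\le\norm{\bm v}$, so $\norm{(\widetilde L_N\bm v)_j}\le\norm{\bm v}\sum_k[s^k]\widetilde h^j\le\widetilde h(1)^j\norm{\bm v}=q^j\norm{\bm v}$. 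The degree-zero block is simply copied, which is consistent with $q^0=1$.
\item For the inductive step: if the blocks are bounded by $q^k\norm{\bm v}$, the same estimate gives the bound $\widetilde h(q)^j\norm{\bm v}$. Since $\widetilde h$ is increasing on $[0,\infty)$ and $q<1$, we get $\widetilde h(q)\le\widetilde h(1)=q$.
\end{itemize}
Summing squares over $j$ then gives $\norm{\widetilde L_N^k\bm v}^2\le\sum_j q^{2j}\norm{\bm v}^2\le g^2\norm{\bm v}^2$. The case $k=0$ is immediate because $g\ge1$. What needs care is checking that the truncation to $k\le N$ and the special degree-zero row $L_{0,k}$ are compatible with this coefficientwise domination.
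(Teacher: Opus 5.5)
Your proposal is correct and follows essentially the same route as the paper. You split $\widetilde C$ by the index $i$ of $\widetilde F_i^\dag$ to get $\norm{\widetilde C}\le q$, use layer orthogonality in the nilpotent Neumann series for $\widetilde D^{-1}$ and $J$, and dominate the Carleman blocks by the coefficients of $\widetilde h(s)^j$ evaluated at $a$ (trajectory bounds) and at $1$ (power bound). The only differences are cosmetic: you reduce to $\widetilde Q_1$ via $\widetilde Q_\ell=\widetilde Q_1\otimes I^{\otimes(\ell-1)}$, and in the power bound you carry the per-degree bound $q^k\norm{\bm v}$ and use $\widetilde h(q)\le\widetilde h(1)=q$, whereas the paper keeps the uniform blockwise bound $b=\norm{\bm v}$ at each step; both give $\norm{\widetilde L_N^k}\le g$.
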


We defer the proof to \cref{pf: lem-hs-uniform}.

\subsection{The sparse history system}

Having constructed a sparse realization of the discrete Carleman recurrence, we now employ the history-state framework of \cite{krovi2023improved}. We encode the auxiliary trajectory in a linear system, analyze its conditioning, and subsequently use terminal padding to improve the probability of extracting the final state. For our discrete-time construction, we establish conditioning bounds whose constants are independent of the truncation degree.

Define the auxiliary vectors by
\begin{equation}\label{eq:hs-aux-trajectory}
    \widetilde{\bm w}_t=J\widetilde{\bm y}_t,
    \qquad
    \widetilde{\bm y}_{t+1}=\Ein^\dag\widetilde{\bm w}_t.
\end{equation}
Multiplying by $\widetilde D$ gives
\[
    \widetilde D\widetilde{\bm w}_0=\Eout\widetilde{\bm y}_0,
    \qquad
    \widetilde D\widetilde{\bm w}_t
      =\Eout\widetilde{\bm y}_t
      =\Eout\Ein^\dag\widetilde{\bm w}_{t-1}
      =P\widetilde{\bm w}_{t-1}\quad(t\ge1).
\]
Let $S_n|t\rangle=|t+1\rangle$ for $0\le t<n-1$ and
$S_n|n-1\rangle=0$. Stacking the first $n$ equations gives
\begin{equation}\label{eq:hs-history}
    \widetilde{\mathcal A}_n
    \begin{pmatrix}\widetilde{\bm w}_0\\\vdots\\
    \widetilde{\bm w}_{n-1}\end{pmatrix}
    =|0\rangle\otimes\Eout\widetilde{\bm y}_0,
    \qquad
    \widetilde{\mathcal A}_n
    =I_n\otimes\widetilde D-S_n\otimes P.
\end{equation}
Here $\dim\K=m^2$, with $m=\sum_{j=0}^Nd^j$. The complexity of solving \cref{eq:hs-history} by using QLSA would require the condition number of $\widetilde{\mathcal{A}}_n. $

\begin{lemma}[History conditioning, uniformly in the truncation degree]
\label{lem:hs-history}
For all $n,N\ge1$,
\begin{align}
    \norm{\widetilde{\mathcal A}_n}
    &\le1+\sqrt{1+q^2},\label{eq:hs-history-norm}\\
    \norm{\widetilde{\mathcal A}_n^{-1}}
    &\le\delta\bigl(1+(n-1)g^2\bigr)\le nK_r,
    \label{eq:hs-history-inverse}\\
    \frac n{\sqrt3}
    \le\kappa(\widetilde{\mathcal A}_n)
    &\le\bigl(1+\sqrt{1+q^2}\bigr)nK_r.
    \label{eq:hs-history-condition}
\end{align}
In particular, the constants in
$\kappa(\widetilde{\mathcal A}_n)=\Theta(n)$ are independent of $N$.
\end{lemma}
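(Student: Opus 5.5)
We bound the norm, the inverse, and a lower bound on the condition number separately, using \cref{lem:hs-uniform} throughout.

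\textbf{Norm bound.} Write $\widetilde{\mathcal A}_n=I_n\otimes I_{\K}-(I_n\otimes\widetilde C^\dag+S_n\otimes P)$. Then $\norm{\widetilde{\mathcal A}_n}\le 1+\norm{I_n\otimes\widetilde C^\dag+S_n\otimes P}$, and it suffices to show that this second operator has norm at most $\sqrt{1+q^2}$.

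$P=\Eout\Ein^\dag$ is a partial isometry: $\Ein$ and $\Eout$ are isometries, so $\norm{P}\le1$. Its range is layer $\ell=0$ (the range of $\Eout$), and its co-range is the $p=0$ blocks (the range of $\Ein$). The operator $\widetilde C^\dag$ maps layer $\ell-1$ into layer $\ell$, so its range lies in layers $\ell\ge1$. Hence the ranges of $\widetilde C^\dag$ and $P$ are orthogonal, and the same holds for $I_n\otimes\widetilde C^\dag$ and $S_n\otimes P$. For $X,Y$ with orthogonal ranges, $\norm{(X+Y)v}^2=\norm{Xv}^2+\norm{Yv}^2\le(\norm X^2+\norm Y^2)\norm v^2$. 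This gives $\sqrt{q^2+1}$, using $\norm{\widetilde C}\le q$ and $\norm{S_n}\le1$.

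\textbf{Inverse bound.} Since $S_n$ is nilpotent and $\widetilde D$ is invertible,
\[
\widetilde{\mathcal A}_n^{-1}=\sum_{k=0}^{n-1}S_n^k\otimes(\widetilde D^{-1}P)^k\widetilde D^{-1}.
\]
Note that $\widetilde D^{-1}P=J\Ein^\dag$ and $(\widetilde D^{-1}P)^k=J\widetilde L_N^{k-1}\Ein^\dag$ for $k\ge1$, because $\Ein^\dag J=\widetilde L_N$ by \cref{lem: recovery_Carleman}.

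The naive triangle inequality gives $\delta(1+(n-1)g^2)$ only if each $k\ge1$ term is bounded by $\norm{J}\norm{\widetilde L_N^{k-1}}\norm{\widetilde D^{-1}}\le g\cdot g\cdot\delta$. Combining these bounds from \cref{lem:hs-uniform}, the triangle inequality over the $n$ terms, and $\norm{S_n^k}\le1$ yields $\norm{\widetilde{\mathcal A}_n^{-1}}\le\delta(1+(n-1)g^2)$. Since $g\ge1$, this is at most $n\delta g^2=nK_r$.

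\textbf{Lower bound on $\kappa$.} This is the main obstacle, since the bound must hold uniformly in $N$ and the parameters. Restrict to the scalar block. Consider vectors $v=\sum_t c_t\ket t\otimes e$, where $e$ is the degree-zero coordinate in layer $0$ with $p=0$. This $e$ is fixed by $P$, and since $\widetilde f^{\otimes0}=1$ it satisfies $\widetilde D e=e-\widetilde C^\dag e$. On the other hand, $\widetilde{\mathcal A}_n$ applied to $\ket t\otimes e$ produces $\ket t\otimes(e-\widetilde C^\dag e)-\ket{t+1}\otimes e$.

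Take the test vector $x=\sum_{t}\ket{t}\otimes J\mathbf e_0$, the history of the constant trajectory $\widetilde{\bm y}_t=\mathbf e_0$, which is fixed by $\widetilde L_N$. Then $\widetilde{\mathcal A}_n x=\ket0\otimes\Eout\mathbf e_0$, which has norm $1$, while $\norm x\ge\sqrt n\,\norm{J\mathbf e_0}\ge\sqrt n$ because $\Ein^\dag J\mathbf e_0=\mathbf e_0$. This gives $\norm{\widetilde{\mathcal A}_n^{-1}}\ge\sqrt n$, which is too weak.

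To get linear growth, I would try an input of the form $b=\sum_t\ket t\otimes\Eout\mathbf e_0/\sqrt n$. The solution then has blocks $\widetilde{\bm w}_t=(t+1)J\mathbf e_0$, so $\norm{\widetilde{\mathcal A}_n^{-1}b}\ge\bigl(\sum_t(t+1)^2/n\bigr)^{1/2}\ge n/\sqrt3$. Combined with $\norm{\widetilde{\mathcal A}_n}\ge1$, which holds since $\widetilde{\mathcal A}_n$ has unit diagonal and is triangular in a suitable sense (e.g.\ test on $\ket{n-1}\otimes$ a vector in the top layer), this gives $\kappa\ge n/\sqrt3$. The bound $\sum_{t=1}^n t^2\ge n^3/3$ delivers the constant $\sqrt3$.
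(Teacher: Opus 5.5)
Your norm and inverse bounds follow the paper's proof: orthogonal ranges of $I_n\otimes\widetilde C^\dag$ and $S_n\otimes P$, and the finite Neumann series in $S_n$ with $(\widetilde D^{-1}P)^k=J\widetilde L_N^{k-1}\Ein^\dag$. Your final right-hand side $\sum_t\ket{t}\otimes\Eout\mathbf e_0$ for the lower bound is also the paper's. The inequality you end with is true, but your justification for it fails.

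You claim $\mathbf e_0$ is fixed by $\widetilde L_N$, so that the solution blocks are $(t+1)J\mathbf e_0$. This is false whenever $\bm f\neq0$. The degree-zero column of $\widetilde L_N$ is $(1,\bm f/r,(\bm f/r)^{\otimes2},\ldots,(\bm f/r)^{\otimes N})^T$; compare the first column of $L_2$ in the scalar example. The actual solution blocks are $\bm v_t=n^{-1/2}J\sum_{k=0}^{t}\widetilde L_N^k\mathbf e_0$.

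Your own computation already signals the problem. It gives $\widetilde C^\dag e=\ket{p=0,\ell=1}\otimes\bm f/r\neq0$, so the span of the vectors $\ket{t}\otimes e$ is not invariant under $\widetilde{\mathcal A}_n$. The same mistake invalidates the identity $\widetilde{\mathcal A}_nx=\ket{0}\otimes\Eout\mathbf e_0$ in your discarded $\sqrt n$ attempt.

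The missing observation is that $e=\Eout\mathbf e_0$ is a \emph{left} fixed vector, not a right one:
\begin{itemize}
\item $e^\dag\widetilde D=e^\dag$, because $\widetilde Ce=0$. Equivalently, $\widetilde C^\dag$ has range in layers $\ell\ge1$, so the $(0,0)$ row of $\widetilde D$ is an identity row.
\item $e^\dag P=(\Ein\mathbf e_0)^\dag=e^\dag$.
\item At the level of $\widetilde L_N$, this is the statement that its first row is $\mathbf e_0^T$.
\end{itemize}
Apply $\bra{t}\otimes e^\dag$ to the system. The scalar coordinates $c_t=e^\dag\bm v_t$ decouple and satisfy $c_0=n^{-1/2}$ and $c_t-c_{t-1}=n^{-1/2}$. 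Hence $c_t=(t+1)/\sqrt n$, and $\norm{\widetilde{\mathcal A}_n^{-1}b}^2\ge\sum_t(t+1)^2/n$ for every $\bm f$.

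This is exactly the paper's argument. With it in place of the fixed-vector claim, your estimate $n/\sqrt3$, the bound $\norm{\widetilde{\mathcal A}_n}\ge1$ from the unit diagonal, and the rest of the proof go through unchanged.
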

We defer the proof to \cref{pf: lem-hs-history}.

\begin{remark}
    An alternative approach is to reformulate the discrete iteration system as a suitable continuous-time evolution and apply Schr\"odingerization \cite{jin2024discrete}. Here, we choose to use QLSA, which applies directly to the sparse history system and exploits the conditioning bounds established above.
\end{remark}

\subsection{Terminal padding and physical-state extraction}

\paragraph{Extraction from the original history.}
Conditioned on preparing the ideal normalized solution of
\eqref{eq:hs-history}, the history state is
\begin{equation}\label{eq:hs-unpadded-state}
    |W\rangle
    =\frac{1}{\sqrt{H_n}}
      \sum_{t=0}^{n-1}|t\rangle\otimes\widetilde{\bm w}_t,
    \qquad
    H_n=\sum_{t=0}^{n-1}\norm{\widetilde{\bm w}_t}^2.
\end{equation}
Since $\Ein^\dag\widetilde{\bm w}_{n-1}=\widetilde{\bm y}_n$,
the desired physical vector $\widetilde{\bm y}_{n,1}$ occupies
the auxiliary sector $(p,\ell)=(0,1)$ at clock value $n-1$.
Writing $h_n=\norm{\widetilde{\bm y}_{n,1}}$, the probability
of selecting this clock value and this sector is
\begin{equation}\label{eq:hs-unpadded-success}
    p_{\mathrm{raw}}
    =\frac{\norm{\widetilde{\bm w}_{n-1}}^2}{H_n}
      \frac{h_n^2}{\norm{\widetilde{\bm w}_{n-1}}^2}
    =\frac{h_n^2}{H_n}.
\end{equation}
The first factor is the probability of selecting the terminal clock, and the second is the conditional probability of selecting its physical
sector.

The degree-zero Carleman coordinate remains equal to one, and $J$
preserves it in auxiliary sector $(0,0)$. Hence
$\norm{\widetilde{\bm w}_t}\ge1$.
Together with $\norm{\widetilde{\bm w}_t}\le M_r$ from
\cref{lem:hs-uniform}, this gives
\begin{equation}\label{eq:hs-unpadded-probability-bounds}
    n\le H_n\le nM_r^2,
    \qquad
    \frac{h_n^2}{nM_r^2}
    \le p_{\mathrm{raw}}
    \le\frac{h_n^2}{n}.
\end{equation}
Thus extraction from a single terminal clock carries an explicit
$1/n$ factor, in addition to the dependence on $h_n$.

\paragraph{Terminal padding and physical-state extraction}

To remove the factor $1/n$ in \eqref{eq:hs-unpadded-probability-bounds}, we use the terminal-state padding construction
of \cite{berry2017quantum,liu2021efficient}.
Introduce $n$ additional blocks satisfying
\[
    \bm v_1=\widetilde{\bm w}_{n-1},
    \qquad
    \bm v_j=\bm v_{j-1},
    \quad 2\le j\le n.
\]
These equations repeat the terminal auxiliary vector without advancing
the original dynamics.
The resulting unnormalized padded history vector is
\begin{equation}\label{eq:hs-padded-vector}
    \bm\Omega
    =\sum_{t=0}^{n-1}|t\rangle\otimes\widetilde{\bm w}_t
     +\sum_{t=n}^{2n-1}|t\rangle
        \otimes\widetilde{\bm w}_{n-1}.
\end{equation}
In particular, $\bm\Omega$ contains $2n$ blocks: the original $n$
auxiliary vectors followed by $n$ additional copies of the terminal
vector.

The padded linear system is
\begin{equation}\label{eq:hs-padded-system}
    \mathcal B_n\bm\Omega=\bm b,
    \qquad
    \bm b=|0\rangle\otimes\Eout\widetilde{\bm y}_0,
\end{equation}
where $\mathcal B_n$ is block lower-bidiagonal, with nonzero blocks
\begin{equation}\label{eq:hs-padded-matrix}
    (\mathcal B_n)_{t,t}
    =\begin{cases}
        \widetilde D,&0\le t<n,\\
        I_{\K},&n\le t<2n,
      \end{cases}
    \qquad
    (\mathcal B_n)_{t,t-1}
    =\begin{cases}
        -P,&1\le t<n,\\
        -I_{\K},&n\le t<2n.
      \end{cases}
\end{equation}

We now extract the physical state from the normalized padded history
$|\Omega\rangle=\bm\Omega/\norm{\bm\Omega}$.
Let $\Pi$ be the orthogonal projector selecting the appended labels
$n,\ldots,2n-1$ and auxiliary sector $(p,\ell)=(0,1)$.
Every selected block contains $\widetilde{\bm y}_{n,1}$, so
\[
    \Pi\bm\Omega
    =\sum_{t=n}^{2n-1}|t\rangle
      \otimes|p=0,\ell=1\rangle
      \otimes\widetilde{\bm y}_{n,1},
    \qquad
    \norm{\Pi\bm\Omega}^2=nh_n^2.
\]
Meanwhile, orthogonality of the history labels gives
\[
    \norm{\bm\Omega}^2
    =H_n+n\norm{\widetilde{\bm w}_{n-1}}^2
    \le 2nM_r^2.
\]
Consequently, conditioned on preparing the ideal normalized padded
history, the extraction probability is
\begin{equation}\label{eq:hs-success}
    p_{\mathrm{out}}
    =\frac{\norm{\Pi\bm\Omega}^2}{\norm{\bm\Omega}^2}
    =\frac{nh_n^2}
           {H_n+n\norm{\widetilde{\bm w}_{n-1}}^2}
    \ge\frac{h_n^2}{2M_r^2}.
\end{equation}
Padding therefore removes the explicit $1/n$ factor from the
probability bound, while retaining the dependence on $h_n$.

For $h_n>0$, the normalized state after successful projection is
\begin{equation}\label{eq:hs-projected-state}
    \frac{\Pi\bm\Omega}{\norm{\Pi\bm\Omega}}
    =\left(\frac{1}{\sqrt n}
       \sum_{t=n}^{2n-1}|t\rangle\right)
      \otimes|p=0,\ell=1\rangle
      \otimes\frac{\widetilde{\bm y}_{n,1}}{h_n}.
\end{equation}
Discarding the history and auxiliary-label registers leaves
$\widetilde{\bm y}_{n,1}/h_n$, which equals the normalized unscaled
truncated physical vector by \eqref{eq:hs-physical-recovery}.

\subsection{Quantum implementation and complexity}

\begin{assumption}[Quantum input access]\label{ass:hs-access}
We have coherent row, column, and value oracles for $\bm f,A,B$,
with controlled and inverse access. The parameters in
\cref{ass: main} are known. For $u=\norm{\bm x_0}>0$,
a unitary $U_x$ prepares $\bm x_0/u$.
\end{assumption}

\begin{lemma}[Preparation of the padded history]
\label{lem:hs-implementation}
Under \cref{ass:hs-access}, set
\begin{equation}\label{eq:hs-sparsity}
    s=2+s_f+s_A+s_B.
\end{equation}
The padded matrix satisfies
\begin{equation}\label{eq:hs-padded-bounds}
    \norm{\mathcal B_n}\le2+q,
    \qquad
    \norm{\mathcal B_n^{-1}}\le2nK_r.
\end{equation}
For $0<\eta\le1/2$, a coherent QLSA prepares an $\eta$-approximation in vector norm to
\[
    |\Omega\rangle=\frac{\bm\Omega}{\norm{\bm\Omega}}
\]
with constant heralded success probability, using
\[
    O\!\left(snK_r\log(1/\eta)\right)
\]
queries to the coefficient oracles and a unitary $U_b$
preparing $\bm b/\norm{\bm b}$.
Each use of $U_b$ requires at most $N$ controlled calls to $U_x$,
and none when $u=0$.
\end{lemma}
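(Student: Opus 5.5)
We prove the lemma in three parts: the operator-norm bounds for $\mathcal B_n$, the sparsity count, and the resource count for the QLSA together with the preparation of $U_b$.

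\textbf{Norm of $\mathcal B_n$.} Write $\mathcal B_n=\mathcal D_n-\mathcal S_n$. Here $\mathcal D_n=\mathrm{diag}(\widetilde D,\ldots,\widetilde D,I_{\K},\ldots,I_{\K})$, and $\mathcal S_n$ is the strictly lower block shift carrying $P$ on the subdiagonal for $1\le t<n$ and $I_{\K}$ for $n\le t<2n$. Since $P=\Eout\Ein^\dag$ is a product of isometries, it is a partial isometry with $\norm P\le1$. A block shift whose blocks have norm at most one has norm at most one, so $\norm{\mathcal S_n}\le1$. By \cref{lem:hs-uniform}, $\norm{\mathcal D_n}\le\max\{1+q,1\}=1+q$. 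The triangle inequality then gives $\norm{\mathcal B_n}\le2+q$.

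\textbf{Inverse of $\mathcal B_n$.} Solve $\mathcal B_n\bm v=\bm c$ by block forward substitution. The first $n$ blocks are exactly the system $\widetilde{\mathcal A}_n$. By \cref{lem:hs-history}, the vector $\bm v_{\le n-1}$ of these blocks satisfies $\norm{\bm v_{\le n-1}}\le\delta(1+(n-1)g^2)\norm{\bm c_{\le n-1}}$, and we want more precisely a bound on its last block $\bm v_{n-1}$. For the padded blocks, $\bm v_{t}=\bm v_{n-1}+\sum_{s=n}^{t}\bm c_s$ for $t\ge n$. Hence
\[
\norm{\bm v_{\ge n}}\le\sqrt n\norm{\bm v_{n-1}}+n\norm{\bm c_{\ge n}},
\]
using the norm $n$ of the $n\times n$ lower-triangular all-ones matrix, or the cruder bound $\sqrt{n(n+1)/2}\le n$.

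It remains to bound $\norm{\bm v_{n-1}}$. Write $\bm v_t=\sum_{s\le t}\widetilde D^{-1}(P\widetilde D^{-1})^{t-s}\bm c_s$, where $\widetilde D^{-1}(P\widetilde D^{-1})^k=J\widetilde L_N^{k-1}\Ein^\dag\widetilde D^{-1}$ for $k\ge1$. By the power bound \eqref{eq: L_N_power_bound} and $\norm J\le g$, each term has norm at most $g^2\delta$. Cauchy--Schwarz then gives $\norm{\bm v_{n-1}}\le\delta(1+g^2\sqrt{n-1})\norm{\bm c}$. Combining the two parts should yield $\norm{\mathcal B_n^{-1}}\le\delta(1+(n-1)g^2)+\sqrt n\,\delta g^2\sqrt n+n\le 2nK_r$, after absorbing terms using $\delta,g\ge1$. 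Making the constant exactly $2$ is the delicate bookkeeping step. If it fails, I would bound $\mathcal B_n^{-1}$ through the block-Toeplitz sum of its entries, each of norm at most $\delta g^2$ (or $1$ in the identity region), applying the Schur test row by row: each block row or column sums to at most $nK_r+n$, which is at most $2nK_r$ because $K_r\ge1$.

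\textbf{Sparsity and queries.} By \cref{lem:sparsity-D}, each row and column of $\widetilde D$ has at most $s_D$ nonzeros, and $P$ is a 0/1 partial permutation with at most one nonzero per row and column. Hence $\mathcal B_n$ has sparsity at most $s_D+1=s$. Row and column oracles for $\mathcal B_n$ are built from the oracles for $\bm f,A,B$, which are rescaled by the known $r$, together with arithmetic on the labels $(t,p,\ell)$ and the tensor positions specified by \eqref{eq:sparse-Q}. Each entry query costs $O(1)$ coefficient queries. A block encoding then has normalization $O(s\max|\text{entry}|)=O(s)$, since the entries are bounded by $\max\{1,\alpha,c/r,r\beta\}\le1$. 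The optimal QLSA \cite{costa2022optimal} uses $O(\kappa\log(1/\eta))$ calls with $\kappa=O(s\norm{\mathcal B_n^{-1}})=O(snK_r)$, and heralds success with constant probability.

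\textbf{Preparing $U_b$.} The right-hand side is $\bm b=|0\rangle\otimes\Eout\widetilde{\bm y}_0$ with blocks $(u/r)^j\,(\bm x_0/u)^{\otimes j}$. We first prepare a degree register in $\sum_j (u/r)^j|j\rangle/\mathrm{norm}$, which is a classically known amplitude vector. Controlled on $j$, we then apply $U_x$ to $j$ registers, giving at most $N$ controlled calls, and pad the remaining registers appropriately into the layout of $\Eout$. When $u=0$ only $j=0$ occurs and no calls are needed. The main obstacle is the inverse bound with the stated constant.
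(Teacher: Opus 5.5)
Your proof is essentially the paper's. You use the same diagonal/subdiagonal split to get $\norm{\mathcal B_n}\le 2+q$. You use the same forward-substitution identification of the blocks of $\mathcal B_n^{-1}$ as $T^k\widetilde D^{-1}$ (with $T=\widetilde D^{-1}P$) or $I_{\K}$, each of norm at most $K_r$, followed by a block Schur test. The sparsity count $s_D+1$, the unit entry bound, and the appeal to \cite{costa2022optimal} also match. Your explicit construction of $U_b$ fills in a step the paper's proof leaves implicit: a classically known amplitude register over degrees, followed by at most $N$ controlled $U_x$ calls.

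Two corrections to the inverse bound are needed. First, your triangle-inequality route cannot give the constant $2$. The combination you write equals $\delta+(2n-1)K_r+n$, which exceeds $2nK_r$ as soon as $n+\delta>K_r$, so it fails for all large $n$. The reason is that the separately estimated unpadded and padded parts each already cost about $nK_r$, plus $n$ from the identity region. The Schur test is therefore not a fallback but the argument. Second, within the Schur test, block columns $s<n$ contain $n$ padded copies of $T^{n-1-s}\widetilde D^{-1}$, not identities. Their sums are bounded by $(2n-s)K_r\le 2nK_r$, not by $nK_r+n$. Since block rows are bounded by $nK_r+n\le 2nK_r$ (using $K_r\ge 1$), the conclusion $\norm{\mathcal B_n^{-1}}\le 2nK_r$ still holds.
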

The proof is provided in \cref{pf: lem-hs-implementation}.

\begin{theorem}[Quantum preparation of the physical state]
\label{thm:hs-quantum}
Suppose \cref{ass: main,ass:hs-access} hold.
Fix $n\ge1$, $0<\varepsilon\le1$, and a known lower bound
$0<\xi\le\norm{\bm x_n}$.
Choose $r$ as in \eqref{eq:hs-parameters}, and set
\begin{equation}\label{eq:hs-N-choice}
    N=\max\left\{1,
       \left\lceil
       \frac{\log(8\rho/(\varepsilon\xi))}
            {\log(1/R)}
       \right\rceil\right\},
    \qquad
    \Gamma=\frac{2rM_r}{\xi}.
\end{equation}
There is an algorithm with success probability at least
$2/3$ whose conditional output $\sigma_{\mathrm{out}}$ satisfies
\begin{equation}\label{eq:hs-output-error}
    \frac12\left\|
       \sigma_{\mathrm{out}}
       -\frac{\bm x_n\bm x_n^\dag}{\norm{\bm x_n}^2}
    \right\|_1\le\varepsilon.
\end{equation}
Its query complexity is
\begin{equation}\label{eq:hs-query-bound}
    Q=O\!\left(
       snK_r\Gamma\log\frac{32\Gamma}{\varepsilon}
       \right)
     =\widetilde O\!\left(
       \frac{sn\Gamma}{(1-q)(1-q^2)}
       \right) = \widetilde{O}\!\left(\frac{snr}{\xi(1-q)(1-q^2)^{3/2}(1-a^2)^{1/2}}\right).
\end{equation}
to the coefficient oracles and $U_b$, including controlled
and inverse queries. The number of $U_x$ queries is $O(NQ)$,
and is zero when $u=0$.
\end{theorem}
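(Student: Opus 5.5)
The proof chains three error sources: Carleman truncation, QLSA imprecision, and projection onto the physical sector, with amplitude amplification handling the success probability.

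\textbf{Truncation step.} By \cref{thm: uniform} and the choice of $N$ in \eqref{eq:hs-N-choice}, $\norm{\bm x_n-\bm y_{n,1}}\le\rho R^N\le\varepsilon\xi/8$. The standard bound $\norm{\bm u/\norm{\bm u}-\bm v/\norm{\bm v}}\le 2\norm{\bm u-\bm v}/\norm{\bm u}$ then gives that the normalized truncated vector $\bm y_{n,1}/\norm{\bm y_{n,1}}$ is within $\varepsilon/4$ of $\bm x_n/\norm{\bm x_n}$. Pure-state trace distance is at most the vector distance, so this contributes at most $\varepsilon/4$. The same estimate gives $\norm{\bm y_{n,1}}\ge\xi(1-\varepsilon/8)\ge 7\xi/8$. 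By \eqref{eq:hs-physical-recovery}, $h_n=\norm{\widetilde{\bm y}_{n,1}}\ge 7\xi/(8r)$, and \eqref{eq:hs-success} then gives $p_{\mathrm{out}}\ge h_n^2/(2M_r^2)\gtrsim 1/\Gamma^2$.

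\textbf{Extraction step.} Apply \cref{lem:hs-implementation} to prepare $|\widehat\Omega\rangle$ with $\norm{|\widehat\Omega\rangle-|\Omega\rangle}\le\eta$, then project with $\Pi$. The projected normalized state is within $2\eta/\sqrt{p_{\mathrm{out}}}$ of the ideal state \eqref{eq:hs-projected-state}. Choose $\eta=c\,\varepsilon/\Gamma$ for a small constant $c$, e.g.\ $\eta=\varepsilon/(32\Gamma)$. Then this error is at most $\varepsilon/4$, and $\norm{\Pi|\widehat\Omega\rangle}\ge\sqrt{p_{\mathrm{out}}}-\eta$ stays of order $1/\Gamma$. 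Discarding the clock and label registers leaves, in the ideal case, the product-form state \eqref{eq:hs-projected-state}. Partial trace does not increase trace distance, so the residual error is inherited.

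\textbf{Amplification and cost.} Boost the projection success to at least $2/3$ with fixed-point or standard amplitude amplification, using $O(\Gamma)$ rounds of the QLSA preparation unitary and its inverse. This requires the QLSA to be a coherent unitary with a heralding flag, so the heralded success can be folded into the amplified projector. Each round costs $O(snK_r\log(1/\eta))=O(snK_r\log(32\Gamma/\varepsilon))$ queries, giving the stated total $Q$. The $U_x$ count follows from the $N$ calls per $U_b$ in \cref{lem:hs-implementation}. The $\widetilde O$ forms follow by substituting $K_r=\delta g^2$ and $M_r$ from \eqref{eq:hs-constants}.

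\textbf{Main obstacle.} The delicate point is this amplification step. Amplitude amplification needs the \emph{unnormalized} overlap with the target subspace to be known up to a constant, or needs a fixed-point variant. It also needs the combined error, after the $O(\Gamma)$ rounds, to still be $O(\varepsilon)$ rather than growing with the round count. I would handle this with fixed-point amplification, which requires only the known lower bound $\sqrt{p_{\mathrm{out}}}\gtrsim 1/\Gamma$ that comes from $\xi$, together with a final triangle inequality summing the three errors to at most $\varepsilon$.
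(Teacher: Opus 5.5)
Your proposal is correct and follows the paper's proof essentially step for step. Both use the $\varepsilon/4$ truncation bound via the normalization inequality, the choice $\eta=\varepsilon/(32\Gamma)$ with projection error $2\eta/\sqrt{p_{\mathrm{out}}}$, fixed-point amplitude amplification with $O(\Gamma)$ coherent solver calls, and monotonicity of trace distance under discarding registers; your bound $h_n\ge 7\xi/(8r)$ is slightly sharper than the paper's $\xi/(2r)$ but changes nothing. The error-accumulation concern you flag does not arise: amplification acts within the two-dimensional span of the good and bad components of the approximate QLSA state, so the conditional output is exactly $\Pi|\widehat\Omega\rangle/\|\Pi|\widehat\Omega\rangle\|$.
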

The proof is provided in \cref{pf: thm-hs-quantum}.

The lower bound $\xi$ is an additional input promise.
For fixed stability parameters and $\xi$, the query bound is linear in $n$. The output encodes the direction of $\bm x_n$. Classical readout requires additional estimation. 

\begin{remark} \label{rem: sparsity_speedup}
    To the best of our knowledge, recent works \cite{bravo2026optimal, berry2026root} improve the sparsity dependence of the QLSA query complexity from linear in $s$ to proportional to $\sqrt{s}$.
\end{remark}

\section{Numerical Experiments} \label{sec: num}
This section presents numerical evidence for the convergence of the truncated Carleman embedding.

We first consider the one-dimensional quadratic map
\begin{equation} \label{eq:example-1}
    x_{n+1} = 0.1 + 0.4x_n + 0.5x_n^2,
    \qquad x_0 > 0.
\end{equation}
The map in \cref{eq:example-1} has two fixed points, $x=0.2$ and $x=1$. The fixed point $x=0.2$ is stable, whereas $x=1$ is unstable. In particular, for the initial conditions considered below, the iterates converge to $x=0.2$ rather than $x=1$.

In \cref{fig:carleman-results}, we consider
\[
x_0 \in \{0.25,\; 0.5,\; 0.75,\; 0.9\}.
\]
\Cref{fig:error-vs-N} shows that the Carleman truncation error decreases as the truncation level $N$ increases. The y-axis plots the maximum of absolute error over all iteration $n\le 100$. Moreover, \cref{fig:physical-trajectories} shows that, at truncation level $N=12$, the Carleman trajectories corresponding to these initial conditions approach the stable fixed point $x=0.2$.

\begin{figure}[htbp]
    \centering
    \begin{subfigure}[b]{0.49\textwidth}
        \centering
        \includegraphics[width=\linewidth]{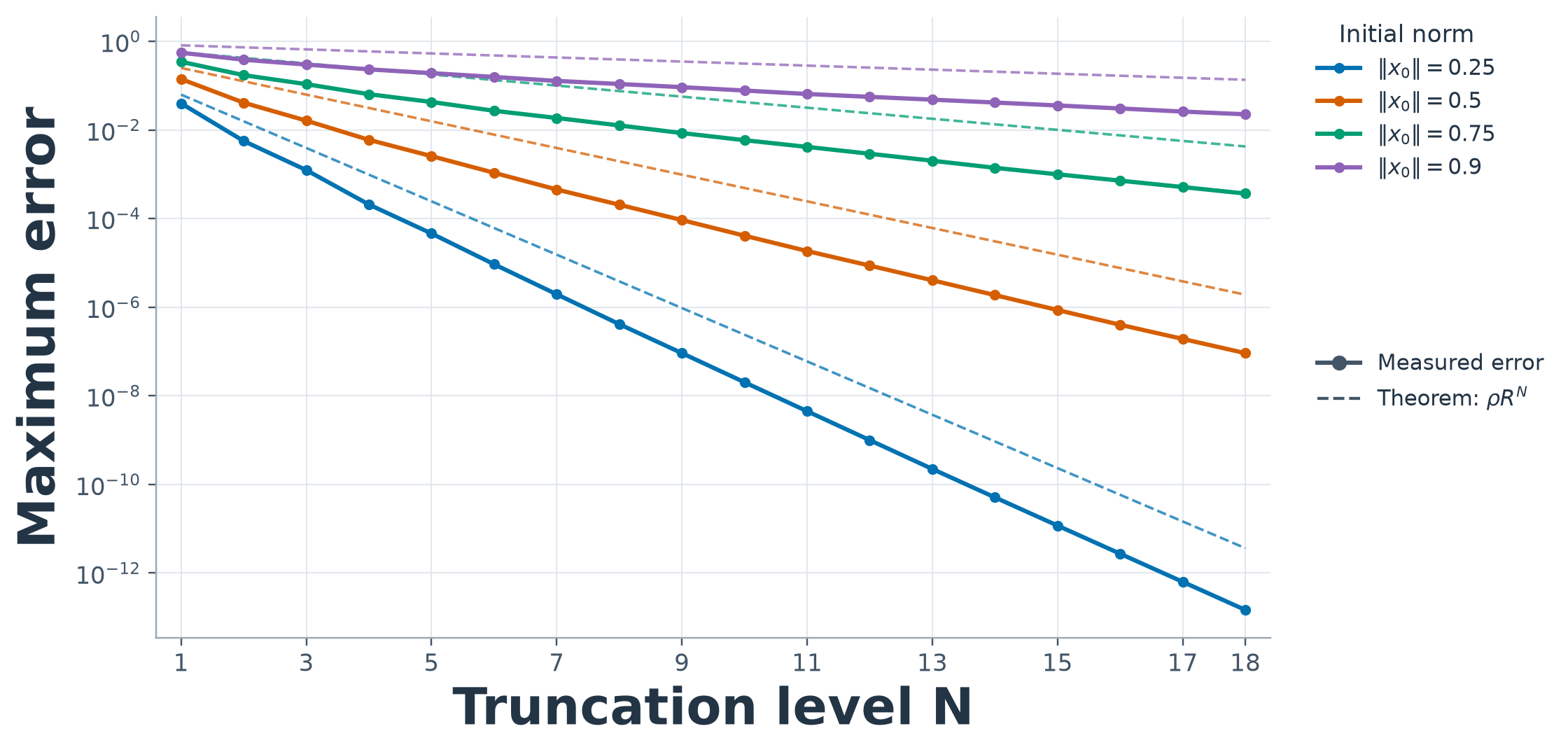}
        \caption{Carleman truncation error versus truncation level $N$.}
        \label{fig:error-vs-N}
    \end{subfigure}
    \hfill
    \begin{subfigure}[b]{0.49\textwidth}
        \centering
        \includegraphics[width=\linewidth]{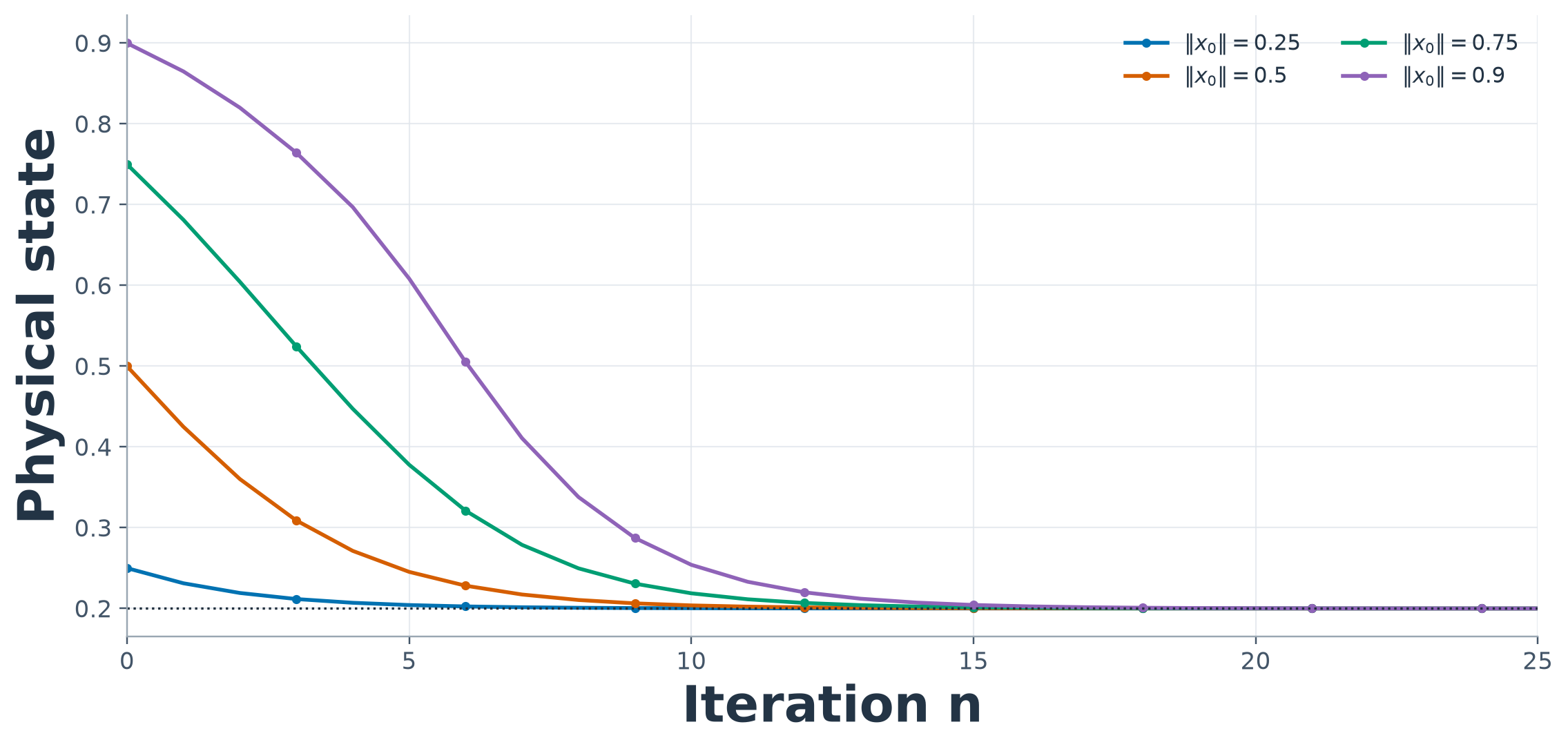}
        \caption{Trajectories from the finite Carleman system toward the fixed point with truncation level $N=12$.}
        \label{fig:physical-trajectories}
    \end{subfigure}
    \caption{Numerical results for the discrete Carleman approximation.}
    \label{fig:carleman-results}
\end{figure}

We next consider a three-dimensional example from \cite{gleich2015}. Its discrete iteration is
\begin{equation}\label{eq:example-2}
    \bm x_{n+1}=(1-\eta)\bm v+\eta T(\bm x_n\otimes\bm x_n),
\end{equation}
where
\[ T= \mat{0 & \frac12 & 0 & \frac12 & 0 & 1 & \frac12 & \frac12 & 0\\ 0 & 0 & 0 & 0 & \frac12 & 0 & 0 & \frac12 & 0\\ 1 & \frac12 & 1 & \frac12 & \frac12 & 0 & \frac12 & 0 & 1 }. \]
We set $\eta=0.2$ and $\bm v=\frac13\mat{1&1&1}^{T}$, and initialize the iteration at $\bm x_0=\bm v$. These parameters satisfy \cref{ass: main}. \Cref{fig:example2} shows the Carleman truncation error for several truncation levels $N$.

Writing \cref{eq:example-2} as $\bm x_{n+1}=\bm f+B\bm x_n^{\otimes 2}$, we have $\bm f=(1-\eta)\bm v$, $A=0$, and $B=\eta T$. Because this map contains only constant and quadratic terms, odd input-degree blocks do not contribute to the Carleman update. Consequently, truncation at levels $2k$ and $2k+1$ gives the same physical-state approximation for every $k\geq1$. This explains why the error in \cref{fig:example2} decreases from odd to even $N$, but remains unchanged from even to odd $N$.

\begin{figure}[htbp]
    \centering
    \includegraphics[width=0.5\linewidth]{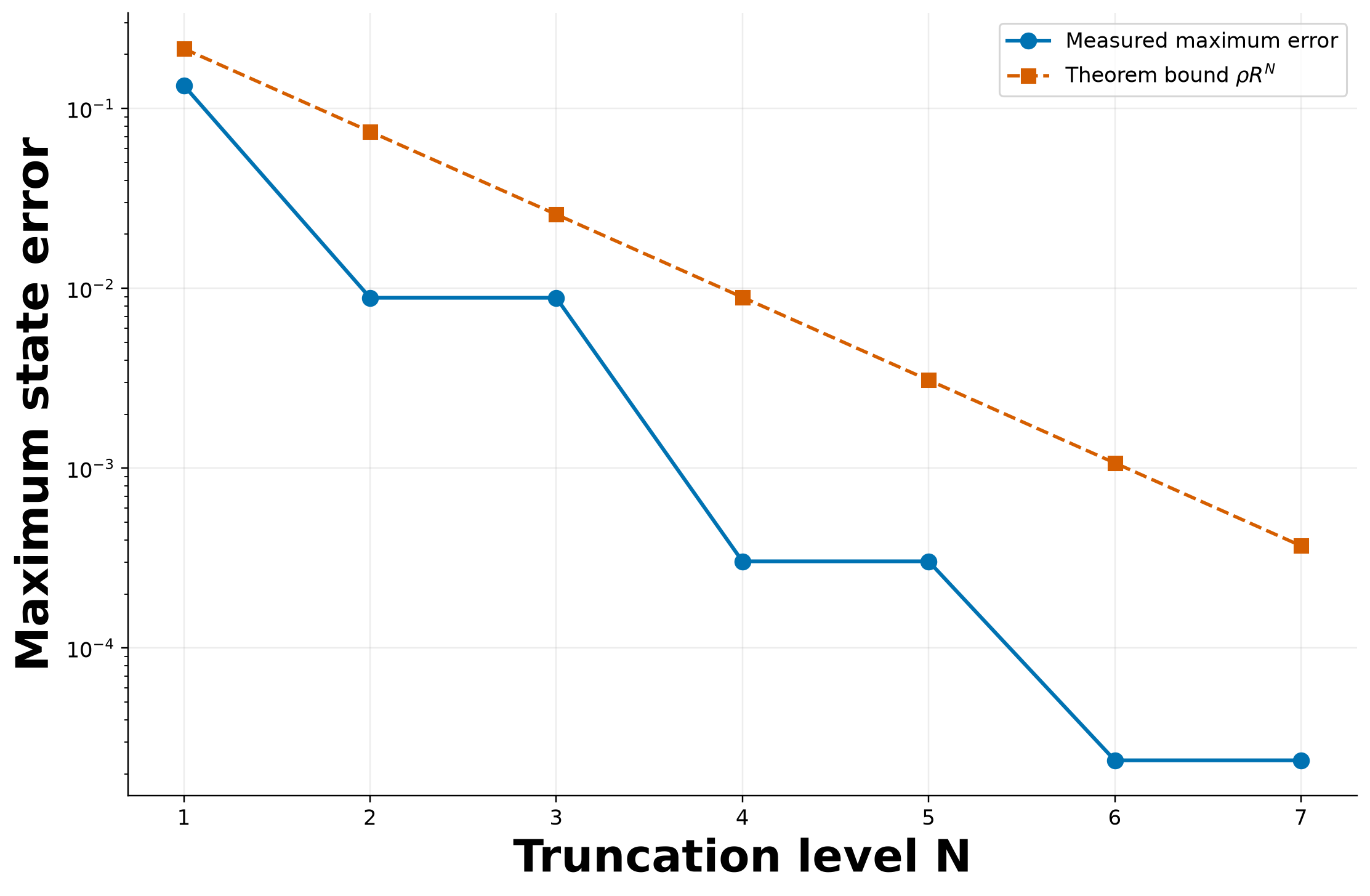}
    \caption{The Carleman truncation error versus truncation level $N$.}
    \label{fig:example2}
\end{figure}

\section{Conclusion}

We studied the Carleman embedding of forced quadratic discrete dynamical systems. Under explicit stability and initial-state assumptions, the degree-$N$ truncation approximates the original trajectory with an error that decreases geometrically in $N$, uniformly over all iterations. In particular, for fixed stability parameters, a degree logarithmic in the inverse accuracy suffices, regardless of how many steps the system is evolved.

A finite Carleman lift is not automatically efficient to implement: even
sparse coefficient maps can produce dense blocks when their tensor
products are formed. We addressed this by representing the truncated
Carleman update exactly through an auxiliary operator whose layers
apply the coefficient maps one tensor factor at a time. The resulting
linear system has row and column sparsities bounded independently of
the truncation degree. This separates the accuracy needed for the
Carleman approximation from the sparsity of the system used to
compute it.

We then rescaled the auxiliary representation and assembled the
updates into a history system. Its condition number grows as
$\Theta(n)$, with constants independent of the truncation degree.
Combined with the uniform error bound, this gives a quantum algorithm
for preparing a state close to the normalized $n$th iterate. Under the
stated coherent-access assumptions and a positive lower bound on the
norm of that iterate, the query complexity is linear in $n$ and
polylogarithmic in the inverse accuracy when the stability parameters
are fixed. The result concerns state preparation. Estimating
observables or recovering the full iterate classically requires
additional queries. Finally, the numerical experiments illustrate the predicted decrease in truncation error for one- and three-dimensional examples. 

In our work, we considered the polynomial discrete systems with focus on quadratic case. On the other hand, there are plenty of nonlinear discrete systems are not in polynomial in machine learning, optimization and physical models. Whether Carleman embedding guarantees error convergence for  broader classes of functions is an interesting future direction. Extending the stability and error analysis, estimating the preparation and measurement cost will also help establish end to end cost.

\section{Acknowledgments}
The authors gratefully acknowledge support from the National Science Foundation under Grants Nos.~DMS-2411120 and DMS-2552687. The authors fully acknowledge the use of LLM for polishing the language and implementing the numerical results.


\bibliographystyle{unsrtnat}
\bibliography{qde-references}

@article{sanavio2024lattice,
  author  = {Sanavio, Claudio and Succi, Sauro},
  title   = {{Lattice Boltzmann--Carleman} quantum algorithm and circuit
             for fluid flows at moderate {Reynolds} number},
  journal = {AVS Quantum Science},
  volume  = {6},
  pages   = {023802},
  year    = {2024},
  doi     = {10.1116/5.0195549}
}

@article{sanavio2025carleman,
  author  = {Sanavio, Claudio and Simon, William A. and Ralli, Alex
             and Love, Peter and Succi, Sauro},
  title   = {{Carleman-lattice-Boltzmann} quantum circuit
             with matrix access oracles},
  journal = {Physics of Fluids},
  volume  = {37},
  pages   = {037123},
  year    = {2025},
  doi     = {10.1063/5.0254588}
}

@article{jin2024discrete,
    author = {Jin, Shi and Liu, Nana},
    title = {Quantum simulation of discrete linear dynamical systems and simple iterative methods in linear algebra},
    journal = {Proceedings of the Royal Society A: Mathematical, Physical and Engineering Sciences},
    volume = {480},
    number = {2291},
    pages = {20230370},
    year = {2024},
    month = {06},
    issn = {1364-5021},
    doi = {10.1098/rspa.2023.0370}
}

@article{pruekprasert2020moment,
  author = {Pruekprasert, Sasinee and Takisaka, Toru and Eberhart, Clovis and
            Cetinkaya, Ahmet and Dubut, J{\'e}r{\'e}my},
  title = {Moment Propagation of Discrete-Time Stochastic Polynomial Systems using Truncated {Carleman} Linearization},
  journal = {IFAC-PapersOnLine},
  volume = {53},
  number = {2},
  pages = {14462--14469},
  year = {2020},
  doi = {10.1016/j.ifacol.2020.12.1447},
  url = {https://doi.org/10.1016/j.ifacol.2020.12.1447}
}

@article{pruekprasert2024moment,
  author = {Pruekprasert, Sasinee and Dubut, Jeremy and Takisaka, Toru and
            Eberhart, Clovis and Cetinkaya, Ahmet},
  title = {Moment propagation of polynomial systems through {Carleman} linearization for probabilistic safety analysis},
  journal = {Automatica},
  volume = {160},
  pages = {111441},
  year = {2024},
  doi = {10.1016/j.automatica.2023.111441},
  eprint = {2201.08648},
  archivePrefix = {arXiv},
  primaryClass = {eess.SY},
  url = {https://doi.org/10.1016/j.automatica.2023.111441}
}

@article{myszkowski2022recursions,
  author = {Myszkowski, Miko{\l}aj},
  title = {{Carleman} Linearization and Systems of Arbitrary Depth Polynomial Recursions},
  journal = {Advances in Linear Algebra \& Matrix Theory},
  volume = {12},
  number = {1},
  pages = {1--23},
  year = {2022},
  doi = {10.4236/alamt.2022.121001},
  eprint = {2109.13165},
  archivePrefix = {arXiv},
  url = {https://doi.org/10.4236/alamt.2022.121001}
}

@inproceedings{bai2019deep,
  title = {Deep Equilibrium Models},
  author = {Bai, Shaojie and Kolter, J. Zico and Koltun, Vladlen},
  booktitle = {Advances in Neural Information Processing Systems},
  volume = {32},
  year = {2019},
  url = {https://proceedings.neurips.cc/paper/2019/hash/01386bd6d8e091c2ab4c7c7de644d37b-Abstract.html}
}

@inproceedings{bai2021stabilizing,
  title = {Stabilizing Equilibrium Models by {Jacobian} Regularization},
  author = {Bai, Shaojie and Koltun, Vladlen and Kolter, J. Zico},
  booktitle = {Proceedings of the 38th International Conference on Machine Learning},
  series = {Proceedings of Machine Learning Research},
  volume = {139},
  pages = {554--565},
  year = {2021},
  publisher = {PMLR},
  url = {https://proceedings.mlr.press/v139/bai21b.html}
}

@article{berry2026root,
  title = {Root-sparsity scaling for Hamiltonian simulation, differential equations and linear-system solvers},
  author = {Berry, Dominic W and Rawat, Abhishek and Simon, Sophia and Low, Guang Hao},
  journal = {arXiv preprint arXiv:2610.02939},
  year = {2026},
  url = {https://arxiv.org/abs/2610.02939}
}

@article{bravo2026optimal,
  title = {{An Optimal Quantum Linear Systems Algorithm}},
  author = {Bravo-Prieto, Carlos and Harrow, Aram W. and Kothari, Robin},
  journal = {arXiv preprint arXiv:2609.35660},
  year = {2026},
  url = {https://arxiv.org/abs/2609.35660}
}

@article{gleich2015,
  author = {Gleich, David F. and Lim, Lek-Heng and Yu, Yongyang},
  title = {Multilinear PageRank},
  journal = {SIAM Journal on Matrix Analysis and Applications},
  volume = {36},
  number = {4},
  pages = {1507-1541},
  year = {2015},
  doi = {10.1137/140985160}
}

@inproceedings{lee2016gradient,
  author = {Lee, Jason D. and Simchowitz, Max and
               Jordan, Michael I. and Recht, Benjamin},
  title = {Gradient Descent Only Converges to Minimizers},
  booktitle = {29th Annual Conference on Learning Theory},
  series = {Proceedings of Machine Learning Research},
  volume = {49},
  pages = {1246--1257},
  publisher = {PMLR},
  year = {2016},
  url = {https://proceedings.mlr.press/v49/lee16.html}
}

@inproceedings{pascanu2013difficulty,
  author = {Pascanu, Razvan and Mikolov, Tomas and Bengio, Yoshua},
  title = {On the difficulty of training recurrent neural networks},
  booktitle = {Proceedings of the 30th International Conference
               on Machine Learning},
  series = {Proceedings of Machine Learning Research},
  volume = {28},
  number = {3},
  pages = {1310--1318},
  publisher = {PMLR},
  year = {2013},
  url = {https://proceedings.mlr.press/v28/pascanu13.html}
}

@article{carleman1932application,
  author = {Carleman, Torsten},
  title = {Application de la th{\'e}orie des {\'e}quations
             int{\'e}grales lin{\'e}aires aux syst{\`e}mes
             d'{\'e}quations diff{\'e}rentielles non lin{\'e}aires},
  journal = {Acta Mathematica},
  volume = {59},
  pages = {63--87},
  year = {1932},
  doi = {10.1007/BF02546499}
}

@article{harrow2009quantum,
  author = {Harrow, Aram W. and Hassidim, Avinatan and Lloyd, Seth},
  title = {Quantum Algorithm for Linear Systems of Equations},
  journal = {Physical Review Letters},
  volume = {103},
  pages = {150502},
  year = {2009},
  doi = {10.1103/PhysRevLett.103.150502}
}

@article{berry2017quantum,
  author = {Berry, Dominic W. and Childs, Andrew M. and Ostrander, Aaron and Wang, Guoming},
  title = {Quantum Algorithm for Linear Differential Equations with Exponentially Improved Dependence on Precision},
  journal = {Communications in Mathematical Physics},
  volume = {356},
  pages = {1057--1081},
  year = {2017},
  doi = {10.1007/s00220-017-3002-y}
}

@misc{lloyd2020nonlinear,
  author = {Lloyd, Seth and De Palma, Giacomo and Gokler, Can and Kiani, Bobak and Liu, Zi-Wen and Marvian, Milad and Tennie, Felix and Palmer, Tim},
  title = {Quantum algorithm for nonlinear differential equations},
  year = {2020},
  eprint = {2011.06571},
  archiveprefix = {arXiv},
  primaryclass = {quant-ph},
  note = {arXiv:2011.06571},
  url = {https://arxiv.org/abs/2011.06571}
}

@article{joseph2020koopman,
  author = {Joseph, Ilon},
  title = {{Koopman--von Neumann} approach to quantum simulation of nonlinear classical dynamics},
  journal = {Physical Review Research},
  volume = {2},
  pages = {043102},
  year = {2020},
  doi = {10.1103/PhysRevResearch.2.043102}
}

@article{berkolaiko1998analysis,
  author = {Berkolaiko, G. and Rabinovich, S. and Havlin, S.},
  title = {Analysis of {Carleman} Representation of Analytical Recursions},
  journal = {Journal of Mathematical Analysis and Applications},
  volume = {224},
  number = {1},
  pages = {81--90},
  year = {1998},
  doi = {10.1006/jmaa.1998.5986}
}

@article{liu2021efficient,
  author = {Liu, Jin-Peng and Kolden, Herman {\O}ie and Krovi, Hari K. and Loureiro, Nuno F. and Trivisa, Konstantina and Childs, Andrew M.},
  title = {Efficient quantum algorithm for dissipative nonlinear differential equations},
  journal = {Proceedings of the National Academy of Sciences},
  volume = {118},
  pages = {e2026805118},
  year = {2021},
  doi = {10.1073/pnas.2026805118}
}

@article{krovi2023improved,
  author = {Krovi, Hari},
  title = {Improved quantum algorithms for linear and nonlinear differential equations},
  journal = {Quantum},
  volume = {7},
  pages = {913},
  year = {2023},
  doi = {10.22331/q-2023-02-02-913}
}

@article{liu2024towards,
  author = {Liu, Junyu and Liu, Minzhao and Liu, Jin-Peng and Ye, Ziyu and Wang, Yunfei and Alexeev, Yuri and Eisert, Jens and Jiang, Liang},
  title = {Towards provably efficient quantum algorithms for large-scale machine-learning models},
  journal = {Nature Communications},
  volume = {15},
  pages = {434},
  year = {2024},
  doi = {10.1038/s41467-023-43957-x}
}

@article{jennings2026end,
  author = {Jennings, David and Korzekwa, Kamil and Lostaglio, Matteo and Ashworth, Richard and Marsili, Emanuele and Rolston, Stephen},
  title = {End-to-End Quantum Algorithm for Nonlinear Fluid Dynamics with Bounded Quantum Advantage},
  journal = {PRX Quantum},
  volume = {7},
  number = {3},
  pages = {033060},
  year = {2026},
  doi = {10.1103/xysy-q3fp}
}

@inproceedings{gilyen2019quantum,
  title = {Quantum Singular Value Transformation and Beyond: Exponential Improvements for Quantum Matrix Arithmetics},
  author = {Gily{\'e}n, Andr{\'a}s and Su, Yuan and Low, Guang Hao and Wiebe, Nathan},
  booktitle = {Proceedings of the 51st Annual ACM SIGACT Symposium on Theory of Computing},
  pages = {193--204},
  year = {2019},
  publisher = {Association for Computing Machinery},
  doi = {10.1145/3313276.3316366},
  eprint = {1806.01838},
  archiveprefix = {arXiv}
}

@article{yoder2014fixedpoint,
  title = {Fixed-Point Quantum Search with an Optimal Number of Queries},
  author = {Yoder, Theodore J. and Low, Guang Hao and Chuang, Isaac L.},
  journal = {Physical Review Letters},
  volume = {113},
  number = {21},
  pages = {210501},
  year = {2014},
  doi = {10.1103/PhysRevLett.113.210501},
  eprint = {1409.3305},
  archiveprefix = {arXiv},
  primaryclass = {quant-ph}
}

@article{Ber14,
  title = {High-order quantum algorithm for solving linear differential equations},
  author = {Berry, Dominic W.},
  journal = {Journal of Physics A: Mathematical and Theoretical},
  volume = {47},
  number = {10},
  pages = {105301},
  year = {2014},
  publisher = {IOP Publishing},
  doi = {10.1088/1751-8113/47/10/105301}
}

@article{childs2020quantum,
  title = {Quantum spectral methods for differential equations},
  author = {Childs, Andrew M and Liu, Jin-Peng},
  journal = {Communications in Mathematical Physics},
  volume = {375},
  number = {2},
  pages = {1427--1457},
  year = {2020},
  publisher = {Springer},
  doi = {10.1007/s00220-020-03699-z}
}

@article{costa2022optimal,
  author = {Costa, Pedro C. S. and An, Dong and Sanders, Yuval R. and Su, Yuan and Babbush, Ryan and Berry, Dominic W.},
  title = {Optimal Scaling Quantum Linear-Systems Solver via Discrete Adiabatic Theorem},
  journal = {PRX Quantum},
  volume = {3},
  pages = {040303},
  year = {2022},
  doi = {10.1103/PRXQuantum.3.040303}
}

@article{may1976simple,
  author = {May, Robert M.},
  title = {Simple mathematical models with very complicated dynamics},
  journal = {Nature},
  volume = {261},
  pages = {459--467},
  year = {1976},
  doi = {10.1038/261459a0}
}

@article{xue2021homotopy,
  author = {Xue, Cheng and Wu, Yu-Chun and Guo, Guo-Ping},
  title = {Quantum homotopy perturbation method for nonlinear dissipative ordinary differential equations},
  journal = {New Journal of Physics},
  volume = {23},
  pages = {123035},
  year = {2021},
  doi = {10.1088/1367-2630/ac3eff}
}

@article{xue2022quadratic,
  author = {Xue, Cheng and Xu, Xiao-Fan and Wu, Yu-Chun and Guo, Guo-Ping},
  title = {Quantum algorithm for solving a quadratic nonlinear system of equations},
  journal = {Physical Review A},
  volume = {106},
  pages = {032427},
  year = {2022},
  doi = {10.1103/PhysRevA.106.032427}
}

@article{childs2017quantum,
  title = {Quantum Algorithm for Systems of Linear Equations with Exponentially Improved Dependence on Precision},
  author = {Childs, Andrew M. and Kothari, Robin and Somma, Rolando D.},
  journal = {SIAM Journal on Computing},
  volume = {46},
  number = {6},
  pages = {1920--1950},
  year = {2017},
  doi = {10.1137/16M1087072}
}


\crefalias{section}{appendix}
\crefname{appendix}{Appendix}{Appendices}
\Crefname{appendix}{Appendix}{Appendices}

\appendix

\section{Proof of \cref{thm: uniform}} \label{pf: thm-uniform}

\begin{proof}
Define 
\begin{equation}
    h(s)\coloneqq c+\alpha s + \beta s^2.
\end{equation}
Under \cref{ass: main}, we know $\Delta>0$. Thus, to find the fixed point of $h$, we have
\[ h(s)=s\To s= \frac{(1-\alpha)\pm\sqrt{\Delta}}{2\beta} = r_{\pm}. \]
Here $r_-<r_+$, $\rho\coloneqq\max\{\norm{\bm x_0}, r_-\}$. $R\coloneqq\frac{\rho}{r_+}<1$. By triangular inequality, we have
\[ \norm{F(\bm x)} \le \norm{\bm f}+\norm{A}\norm{\bm x}+\norm{B}\norm{\bm x}^2 \le h(\norm{\bm x}). \]
Note that $h'(s)\ge0$ for nonnegative $s$, thus $h(s)$ is increasing for $s\ge0$. Therefore, if $\norm{\bm x_n}\le \rho$, 
\[ \norm{\bm x_{n+1}} =\norm{F(\bm x_n)} \le h(\norm{\bm x_n})\le h(\rho)\le \rho. \]
The last inequality follows from $r_-\le \rho <r_+$. Since $\norm{\bm x_0}\le\rho$, we conclude that $\norm{\bm x_n}\le\rho$ for all $n\in \N_0$. 

For any $j\in\N$, let 
\[ h(s)^j = \sum_{k=0}^{2j}m_{j,k}s^k, \]
for some coefficients $m_{j,k}$. Since $\alpha$ and $\beta$ are upper norm bounds of $A$ and $B$, $\norm{L_{j,k}}\le m_{j,k}$ for all $j,k$. Thus,
\[\sum_{k=0}^{2j}\norm{L_{j,k}\bm x_n^{\otimes k}}\le \sum_{k=0}^{2j}\norm{L_{j,k}}\norm{\bm x_n^{\otimes k}}\le \sum_{k=0}^{2j} m_{j,k}r_+^k = 
h(r_+)^j = r_+^j. \]
Define 
\[
    \bm \eta_{n,j} = \bm x_n^{\otimes j} - \bm y_{n,j}, \qquad \bm \eta_{n,0}=0, \qquad \bm \eta_{0,j}=0.
\]
Then for $j>\lfloor \frac{N}{2}\rfloor$,
\[
    \bm \eta_{n+1,j}  = \bm x_{n+1}^{\otimes j} - \bm y_{n+1,j} = \sum_{k=0}^N L_{j,k}(\bm x_n^{\otimes k }-\bm y_{n,k})+\sum_{k=N+1}^{2j}L_{j,k}\bm x_n^{\otimes k}= \sum_{k=0}^N L_{j,k}\bm \eta_{n,k}+\sum_{k=N+1}^{2j}L_{j,k}\bm x_n^{\otimes k}.
\]
For $j\le \lfloor \frac{N}{2}\rfloor$, one has
\[ \bm \eta_{n+1,j}  =  \sum_{k=0}^N L_{j,k}\bm \eta_{n,k}. \]
Define $\delta\coloneqq R^{N+1}.$ We assume that  $\norm{\bm \eta_{n,j}}\le r_+^{j}\delta$ for every $j=1,\cdots,N$ as the induction hypothesis.
When $2j \le N$, then 
\[ \norm{\bm \eta_{n+1,j}}\le \sum_{k=1}^{2j}\norm{L_{j,k}}r_+^k\delta \le \delta \sum_{k=1}^{2j}m_{j,k}r_+^k \le \delta h(r_+)^j = \delta r_+^j. \]
For the case where $2j>N$,
\[ \norm{\bm \eta_{n+1,j}}\le \sum_{k=1}^{N}\norm{L_{j,k}}\norm{\bm \eta_{n,k}}+\sum_{k=N+1}^{2j}\norm{L_{j,k}}\norm{\bm x_n^{\otimes k}} \le \delta \sum_{k=1}^{2j}m_{j,k}r_+^k = \delta h(r_+)^j = \delta r_+^j.\]
By induction, we know for every $n\in\N_0$,
\[ \norm{\bm \eta_{n,j}}\le r_+^{j}\delta. \]
Take $j=1$, we obtain
\[ \norm{\bm \eta_{n,1}}\le r_+ R^{N+1} = \rho R^N.\]

\end{proof}

\section{Proof of \cref{lem: recovery_Carleman}} \label{pf: lem-recovery-Carleman}
\begin{proof}
    Fix an arbitrary $\bm y=\mat{\bm y_0, \cdots , \bm y_N}\in\V_N$, and let $\bm w$ be the unique solution of
    \[ (I-C^\dag)\bm w=E_{\mathrm{out}}\bm y. \]
    The block equations give
    \[ \bm w^{(0)}=\bm y, \quad \bm w^{(\ell)}=Q_\ell^\dag\bm w^{(\ell-1)},\quad 1\le\ell\le N.\]
Using \cref{eq:sparse-Q}, this becomes
\begin{equation}\label{eq:sparse-w-recursion}
\bm w_{p,\ell}
=
\sum_{\substack{i\in\{0,1,2\}\\p+i\le N}}
\left(
I_{\H}^{\otimes p}\otimes F_i
\otimes I_{\H}^{\otimes(\ell-1)}
\right)
\bm w_{p+i,\ell-1}.
\end{equation}

We claim that
\begin{equation}\label{eq:sparse-w-expansion}
\bm w_{p,\ell}
=
\sum_{\substack{i_1,\ldots,i_\ell\in\{0,1,2\}\\
p+i_1+\cdots+i_\ell\le N}}
\left(
I_{\H}^{\otimes p}\otimes
F_{i_1}\otimes\cdots\otimes F_{i_\ell}
\right)
\bm y_{p+i_1+\cdots+i_\ell}.
\end{equation}
We prove \cref{eq:sparse-w-expansion} by induction on $\ell$,
simultaneously for all $p\in\{0,\ldots,N\}$.
For $\ell=0$, the empty tensor product is the identity, and also $\bm w^{(0)} = \bm y$, thus \eqref{eq:sparse-w-expansion} holds for every $p$. Now suppose \eqref{eq:sparse-w-expansion} holds at layer $\ell-1$ as induction hypothesis for every $p$.
Substituting it into \eqref{eq:sparse-w-recursion} yields
\[
\begin{aligned}
\bm w_{p,\ell}
=&
\sum_{\substack{i_1\in\{0,1,2\}\\p+i_1\le N}}
\ \sum_{\substack{i_2,\ldots,i_\ell\in\{0,1,2\}\\
p+i_1+\cdots+i_\ell\le N}}
\\
&\quad
\left(
I_{\H}^{\otimes p}\otimes F_{i_1}
\otimes I_{\H}^{\otimes(\ell-1)}
\right)
\left(
I_{\H}^{\otimes(p+i_1)}
\otimes F_{i_2}\otimes\cdots\otimes F_{i_\ell}
\right)
\bm y_{p+i_1+\cdots+i_\ell}.
\end{aligned}
\]
By Kronecker product properties,
\[ \left(
I_{\H}^{\otimes p}\otimes F_{i_1}
\otimes I_{\H}^{\otimes(\ell-1)}
\right)
\left(
I_{\H}^{\otimes(p+i_1)}
\otimes F_{i_2}\otimes\cdots\otimes F_{i_\ell}
\right) =
I_{\H}^{\otimes p}\otimes
F_{i_1}\otimes F_{i_2} \otimes \cdots\otimes F_{i_\ell}.
\]
Furthermore, all $i_t$ are nonnegative, so
\[
p+i_1+\cdots+i_\ell\le N
\quad\Longrightarrow\quad
p+i_1\le N.
\]
The outer cutoff is therefore redundant, and the resulting
sum is exactly \eqref{eq:sparse-w-expansion}.
This proves the claim by induction.

Taking $p=0$ and $\ell=j\ge1$, and grouping terms by
$k=i_1+\cdots+i_j$, gives
\[
\bm w_{0,j} = \sum_{k=0}^N
\left( \sum_{\substack{i_1,\ldots,i_j\in\{0,1,2\} \\
i_1+\cdots+i_j=k}} F_{i_1}\otimes\cdots\otimes F_{i_j} \right)\bm y_k=\sum_{k=0}^NL_{j,k}\bm y_k.
\]
Also, $w_{0,0}=\bm y_0$. Thus
\[
\Ein^\dag\bm w=L_N\bm y.
\]
Note that $\bm y$ is arbitrary, so $L_N = \Ein^\dag (I-C^\dag)^{-1}\Eout$.
\end{proof}

\section{Proof of \cref{lem:sparsity-D}} \label{pf: lem-sparsity-D}
\begin{proof}
Recall that $F_0=\bm f$, $F_1=A$, and $F_2=B$.
The nonzero blocks of $C$ are
\[
C_{(p+i,\ell-1),(p,\ell)}
=
I_{\H}^{\otimes p}\otimes F_i^\dag
\otimes I_{\H}^{\otimes(\ell-1)},
\]
where $i\in\{0,1,2\}$, $1\le\ell\le N$, and
$0\le p\le p+i\le N$.

Taking the adjoint exchanges row and column sparsity.
Tensoring with identity matrices preserves the maximum
number of nonzero entries in each row and column.
Thus the blocks corresponding to $i=0,1,2$ have row and
column sparsity bounded by $s_f,s_A,s_B$, respectively.
For $i=0$, this follows because $\bm f^\dag$ has at most $s_f$
nonzero entries in total.

Fix a source sector $(p,\ell)$. For each $i\in\{0,1,2\}$,
there is at most one destination sector,
\[
(p+i,\ell-1).
\]
Hence each block column of $C$ contains at most one block
of each type. Every scalar column of $C$ therefore has
at most $s_f+s_A+s_B$ nonzero entries.

Similarly, for a fixed destination sector $(p',\ell')$
and a fixed $i$, the only possible source sector is
\[
(p'-i,\ell'+1).
\]
Thus each block row of $C$ contains at most one block
of each type, and every scalar row of $C$ has at most
$s_f+s_A+s_B$ nonzero entries.

The same row and column bounds hold for $C^\dag$.
Since $C^\dag$ changes the layer index, its diagonal is zero.
Adding $I_{\K}$ therefore contributes exactly one additional
nonzero entry to each row and column, proving
\cref{eq:D-sparsity}.
\end{proof}

\section{Proof of \cref{lem:hs-uniform}}
\label{pf: lem-hs-uniform}

\begin{proof}
For $i\in\{0,1,2\}$, let $\widetilde C_i$ contain precisely
the blocks of $\widetilde C$ involving $\widetilde F_i^\dag$.
By \cref{eq:sparse-Q,eq:sparse-C}, these blocks are
\[
    (\widetilde C_i)_{(p+i,\ell-1),(p,\ell)}
    =
    I_{\H}^{\otimes p}\otimes\widetilde F_i^\dag
    \otimes I_{\H}^{\otimes(\ell-1)},
\]
where $1\le\ell\le N$ and $0\le p\le p+i\le N$.
For each fixed $i$, distinct source sectors map to distinct
destination sectors. Orthogonality of these sectors and
multiplicativity of the operator norm under tensor products give
\[
    \norm{\widetilde C_i\bm v}^2
    \le
    \norm{\widetilde F_i}^2
    \sum_{\substack{1\le\ell\le N\\0\le p\le p+i\le N}}
        \norm{\bm v_{p,\ell}}^2
    \le
    \norm{\widetilde F_i}^2\norm{\bm v}^2.
\]
Consequently, using \cref{eq:hs-scaled-map},
\[
    \norm{\widetilde C}
    \le\sum_{i=0}^2\norm{\widetilde C_i}
    \le\sum_{i=0}^2\norm{\widetilde F_i}
    \le\frac cr+\alpha+\beta r=q<1.
\]
Since $\widetilde D=I_{\K}-\widetilde C^\dag$, we obtain
\[
    \norm{\widetilde D}\le1+q.
\]
Moreover, $\widetilde C^{N+1}=0$, so
\[
    \widetilde D^{-1}
    =\sum_{\ell=0}^N(\widetilde C^\dag)^\ell,
    \qquad
    \norm{\widetilde D^{-1}}
    \le\sum_{\ell=0}^Nq^\ell
    \le\frac1{1-q}=\delta.
\]
Therefore,
\[
    \kappa(\widetilde D)
    =\norm{\widetilde D}\norm{\widetilde D^{-1}}
    \le(1+q)\delta.
\]

Next, $\Eout$ is an isometry into layer zero, while
$\widetilde C^\dag$ increases the layer index by one.
Thus, for any $\bm v\in\V_N$, the summands in
\[
    J\bm v
    =\sum_{\ell=0}^N
      (\widetilde C^\dag)^\ell\Eout\bm v
\]
belong to mutually orthogonal layers. Hence
\[
    \norm{J\bm v}^2
    =\sum_{\ell=0}^N
      \norm{(\widetilde C^\dag)^\ell\Eout\bm v}^2
    \le\sum_{\ell=0}^Nq^{2\ell}\norm{\bm v}^2
    \le\frac{\norm{\bm v}^2}{1-q^2}.
\]
Taking the supremum over unit vectors proves
$\norm J\le (1-q^2)^{-1/2}$, completing \cref{eq:hs-D-bounds}.

To prove \cref{eq:hs-trajectory-bounds}, define
\[
    \widetilde h(s)
    :=\frac cr+\alpha s+\beta r s^2.
\]
Since $r_-\le\rho<r_+$ and $a=\rho/r$,
\[
    \widetilde h(a)-a
    =\frac{c+\alpha\rho+\beta\rho^2-\rho}{r}
    =\frac{\beta}{r}(\rho-r_-)(\rho-r_+)
    \le0.
\]
For $j\ge1$, write
\[
    \widetilde h(s)^j
    =\sum_{k=0}^{2j}\widetilde m_{j,k}s^k,
    \qquad \widetilde m_{j,k}\ge0.
\]
The tensor-product formula \cref{eq:carleman-blocks},
applied to the rescaled coefficients, gives
\[
    \norm{\widetilde L_{j,k}}
    \le\widetilde m_{j,k}.
\]
Here $\widetilde L_{j,k}$ denotes the corresponding
rescaled Carleman block.

We prove $\norm{\widetilde{\bm y}_{t,j}}\le a^j$
by induction on $t$, simultaneously for $1\le j\le N$.
The initial condition in \cref{eq:hs-recovery} gives
\[
    \norm{\widetilde{\bm y}_{0,j}}
    =\left(\frac{\norm{\bm x_0}}r\right)^j
    \le a^j,
\]
and the degree-zero coordinate remains equal to one.
Assuming the bounds hold at time $t$, we obtain
\[
\norm{\widetilde{\bm y}_{t+1,j}}\le \sum_{k=0}^{\min\{N,2j\}}\norm{\widetilde L_{j,k}}\norm{\widetilde{\bm y}_{t,k}}\le\sum_{k=0}^{\min\{N,2j\}}\widetilde m_{j,k}a^k\le \widetilde h(a)^j
    \le a^j.
\]
The $k=0$ term uses
$\widetilde{\bm y}_{t,0}=1$ and $a^0=1$.
Thus the blockwise bound holds for every $t\ge0$.
Orthogonality of the degree blocks now yields
\[
    \norm{\widetilde{\bm y}_t}^2
    =\sum_{j=0}^N
        \norm{\widetilde{\bm y}_{t,j}}^2
    \le\sum_{j=0}^Na^{2j}
    \le\frac1{1-a^2}.
\]
Combining this with $\norm J\le g$ gives
\[
    \norm{J\widetilde{\bm y}_t}
    \le g\norm{\widetilde{\bm y}_t}
    \le\frac{g}{\sqrt{1-a^2}}=M_r,
\]
which proves \cref{eq:hs-trajectory-bounds}.

Finally, we establish the uniform bound from \cref{eq: L_N_power_bound}.
Since $\widetilde h(1)=q<1$, any vector
$\bm v=(\bm v_0,\ldots,\bm v_N)\in\V_N$ satisfying
$\norm{\bm v_j}\le b$ for all $j$ obeys
\[
    \norm{(\widetilde L_N\bm v)_j}
    \le b\sum_{k=0}^{\min\{N,2j\}}
        \widetilde m_{j,k}
    \le bq^j
    \le b,
    \qquad 1\le j\le N.
\]
The degree-zero block is unchanged and also has norm at most $b$.
Starting with $b=\norm{\bm v}$ and iterating this argument gives
\[
    \norm{(\widetilde L_N^k\bm v)_j}
    \le q^j\norm{\bm v},
    \qquad k\ge1,\quad 0\le j\le N.
\]
Therefore,
\[
    \norm{\widetilde L_N^k\bm v}^2
    \le\sum_{j=0}^Nq^{2j}\norm{\bm v}^2
    \le g^2\norm{\bm v}^2.
\]
Thus $\norm{\widetilde L_N^k}\le g$ for every $k\ge1$, and 
the same bound holds for $k=0$ because $g\ge1$.
All bounds are uniform in $N$.
\end{proof}

\section{Proof of \cref{lem:hs-history}} \label{pf: lem-hs-history}
\begin{proof}
Write $M=\dim\K=m^2$. Substituting
$\widetilde D=I_\K-\widetilde C^\dag$ gives
\[
    \widetilde{\mathcal A}_n=I_{nM}-Z,
    \qquad Z=I_n\otimes\widetilde C^\dag+S_n\otimes P.
\]
The first summand of $Z$ has range in positive auxiliary layers, and the second has range in layer zero because $P=\Eout\Ein^\dag$.
These ranges are orthogonal. Since $\norm{S_n}\le1$ and $\norm P=1$,
\[
    \norm{Z\bm v}^2
    =\norm{(I_n\otimes\widetilde C^\dag)\bm v}^2
     +\norm{(S_n\otimes P)\bm v}^2
    \le(q^2+1)\norm{\bm v}^2.
\]
Thus $\norm Z\le\sqrt{1+q^2}$. Applying the triangle inequality to
$\widetilde{\mathcal A}_n=I_{nM}-Z$ now gives
\[
    \norm{\widetilde{\mathcal A}_n}
    \le\norm{I_{nM}}+\norm Z\le1+\sqrt{1+q^2}.
\]

Now we bound the inverse norm, $\norm{\widetilde{\mathcal{A}}_n^{-1}}$. Set $T=\widetilde D^{-1}P=J\Ein^\dag$. Then
\[
    \widetilde{\mathcal A}_n
    =(I_n\otimes\widetilde D)(I_{nM}-S_n\otimes T).
\]
Since $S_n^n=0$, the second factor has inverse
$\sum_{k=0}^{n-1}S_n^k\otimes T^k$. Reversing the factors therefore yields
\begin{equation}\label{eq:hs-inverse-series}
    \widetilde{\mathcal A}_n^{-1}
    =\sum_{k=0}^{n-1}S_n^k\otimes T^k\widetilde D^{-1}.
\end{equation}
The recovery identity says $\Ein^\dag J=\widetilde L_N$.
Starting with $T=J\Ein^\dag$, induction gives
\[
    T^{k+1}
    =(J\widetilde L_N^{k-1}\Ein^\dag)(J\Ein^\dag)
    =J\widetilde L_N^k\Ein^\dag,
    \qquad
    T^k=J\widetilde L_N^{k-1}\Ein^\dag, \quad\text{for }k\ge1.
\]
Using \cref{lem:hs-uniform} and $\norm{\Ein^\dag}=1$, we obtain
\[
    \norm{T^k\widetilde D^{-1}}
    \le\norm J\norm{\widetilde L_N^{k-1}}
        \norm{\Ein^\dag}\norm{\widetilde D^{-1}}
    \le g^2\delta=K_r,\quad\text{for }k\ge1.
\]
For $k=0$ the bound is $\delta$. Taking norms in
\eqref{eq:hs-inverse-series} and using $\norm{S_n^k}\le1$ gives
\[
    \norm{\widetilde{\mathcal A}_n^{-1}}
    \le\delta+(n-1)K_r
    =\delta\bigl(1+(n-1)g^2\bigr)\le nK_r.
\]
Multiplying by the preceding norm bound proves the upper bound on $\kappa(\widetilde{\mathcal A}_n)$.

Let $\bm e$ be the unit vector in the scalar sector $(p,\ell)=(0,0)$,
and solve
$\widetilde{\mathcal A}_n\bm v=(\bm e,\ldots,\bm e)^T$.
The scalar row of $\widetilde D$ is the identity row, and
$(P\bm v_t)_{0,0}=(\bm v_t)_{0,0}$. The scalar equations are therefore
\[
    (\bm v_0)_{0,0}=1,\qquad
    (\bm v_t)_{0,0}-(\bm v_{t-1})_{0,0}=1,
\]
so $(\bm v_t)_{0,0}=t+1$ by induction. The squared norm of the
right-hand side is $n$, whereas
$\norm{\bm v}^2\ge\sum_{t=0}^{n-1}(t+1)^2$. Consequently,
\[
    \norm{\widetilde{\mathcal A}_n^{-1}}
    \ge\sqrt{\frac1n\sum_{t=0}^{n-1}(t+1)^2}
    =\sqrt{\frac{(n+1)(2n+1)}6}\ge\frac n{\sqrt3}.
\]
Every diagonal entry of $\widetilde{\mathcal A}_n$ is one, so
$\norm{\widetilde{\mathcal A}_n}\ge1$.
The definition of the condition number proves the lower bound.
\end{proof}

\section{Proof of \cref{lem:hs-implementation}} \label{pf: lem-hs-implementation}
\begin{proof}
The diagonal and subdiagonal parts of $\mathcal B_n$ have norms at most $1+q$ and $1$, respectively.
Writing $T=\widetilde D^{-1}P$, forward substitution shows that
every nonzero block of $\mathcal B_n^{-1}$ is either
$T^j\widetilde D^{-1}$ or $I_\K$.
The estimates in the proof of \cref{lem:hs-history} bound these
blocks by $K_r$.
Their block row and column sums are at most $2nK_r$,
proving \eqref{eq:hs-padded-bounds}.

By \cref{lem:sparsity-D} and the identity-copy construction,
$\mathcal B_n$ has row and column sparsity at most $s$.
Its entries have magnitude at most one, and its sparse-access
oracles require $O(1)$ coefficient queries.
The sparse block-encoding construction of
\cite{gilyen2019quantum} therefore encodes $\mathcal B_n/s$.
The Hermitian dilation
\begin{equation}\label{eq:hs-dilation}
    H_{\mathcal B}
    =\frac1s
      \begin{pmatrix}
        0&\mathcal B_n\\
        \mathcal B_n^\dag&0
      \end{pmatrix}
\end{equation}
satisfies
\[
    \norm{H_{\mathcal B}}\le1,
    \qquad
    \norm{H_{\mathcal B}^{-1}}\le2snK_r,
    \qquad
    H_{\mathcal B}^{-1}\binom{\bm b}{0}
      =\binom{0}{s\bm\Omega}.
\]
The claimed solver complexity follows from
\cite[Theorem~11]{costa2022optimal}, with measurements deferred
to coherent success flags.
\end{proof}

\section{Proof of \cref{thm:hs-quantum}} \label{pf: thm-hs-quantum}

\begin{proof}
By \cref{thm: uniform} and the choice of $N$,
\[
    \norm{\bm x_n-\bm y_{n,1}}
       \le\rho R^N\le\frac{\varepsilon\xi}{8},
    \qquad
    h_n=\frac{\norm{\bm y_{n,1}}}{r}
       \ge\frac{\xi}{2r}.
\]
Thus \eqref{eq:hs-success} gives
$p_{\mathrm{out}}\ge1/(2\Gamma^2)$.
The inequality
\[
    \left\|
      \frac{\bm u}{\norm{\bm u}}
      -\frac{\bm v}{\norm{\bm v}}
    \right\|
    \le\frac{2\norm{\bm u-\bm v}}{\norm{\bm u}}
\]
bounds the normalized truncation error by $\varepsilon/4$.

Apply \cref{lem:hs-implementation} with
$\eta=\varepsilon/(32\Gamma)$.
Projection and normalization introduce error at most
\[
    \frac{2\eta}{\sqrt{p_{\mathrm{out}}}}
    \le2\sqrt2\,\Gamma\eta<\frac{\varepsilon}{4}.
\]
Moreover, the approximate projection succeeds with probability
at least
$(\sqrt{p_{\mathrm{out}}}-\eta)^2
 \ge p_{\mathrm{out}}/4$.
Including the solver flag, the joint success probability is
therefore $\Omega(\Gamma^{-2})$.
Fixed-point amplitude amplification
\cite{yoder2014fixedpoint} raises it to at least $2/3$
using $O(\Gamma)$ coherent solver calls, proving
\eqref{eq:hs-query-bound}.

By \cref{eq:hs-projected-state} and \cref{eq:hs-physical-recovery},
the ideal projected physical state is
$\bm y_{n,1}/\norm{\bm y_{n,1}}$.
Trace distance is bounded by the corresponding vector error
and cannot increase under discarding auxiliary registers.
The two error contributions therefore sum to less than
$\varepsilon/2$, proving the claim.
\end{proof}

\end{document}